\documentclass[conference]{IEEEtran}
\makeatletter
\def\ps@headings{%
\def\@oddhead{\mbox{}\scriptsize\rightmark \hfil \thepage}%
\def\@evenhead{\scriptsize\thepage \hfil \leftmark\mbox{}}%
\def\@oddfoot{}%
\def\@evenfoot{}}
\makeatother
\pagestyle{headings}


\newcommand{\eat}[1]{}
\newcommand{\cut}[1]{}
\newcommand{\comm}[1]{{\bf [[ #1 ]]}}

\usepackage{amsmath}
\usepackage{graphicx}
\usepackage{subfigure}
\usepackage{algorithm , algorithmic}

\newtheorem{thm}{Theorem}[section]
\newtheorem{lemma}{Lemma}[section]


\newcommand{\etal}{et.al.}
\newcommand{\bytecost}{byte-cost}
\newcommand{\round}{\textsc{Round}}
\newcommand{\caching}{\textsc{DataPlacement}}

\newcommand{\clust}{\textsc{Cluster}}

\title{Bulk content delivery using co-operating end-nodes with upload/download limits}

\author{Sharad Jaiswal, Anirban Majumder, K V M Naidu, Nisheeth Shrivastava\\ Bell Labs Research India, Bangalore, India.}

\date{}

\begin{document}

\maketitle
\begin{abstract}

We study the problem of optimizing the cost of content delivery in a cooperative network of caches at end-nodes. The caches could be, for example, within the computers of users downloading videos from websites (such as Netflix, Blockbuster etc.), DVRs (such as TiVo, or cable boxes) used as part of video on demand services or public hot-spots (e.g. Wi-Fi access points with a cache) deployed over a city to serve content to mobile users. 
Each cache serves user requests locally over a medium that incurs no additional costs (i.e. WiFi, home LAN); if a request is not cached, it must be fetched from another cache or a central server. In our model, each cache has a tiered back-haul internet connection, with a usage cap (and fixed per-byte costs thereafter). Redirecting requests intended for the central server to other caches with unused back-haul capacity can bring down the network costs. Our goal is to develop a mechanism to optimally 1) place data into the caches and 2) route requests to caches to reduce the overall cost of content delivery. %


We develop a multi-criteria approximation based on a LP rounding procedure that with a small (constant factor) blow-up in storage and upload limits of each cache, gives a data placement that is within constant factor of the optimum. Further, to speed up the solution, we propose a technique to cluster caches into groups, solve the data placement problem within a group, and combine the results in the rounding phase to get the global solution.
Based on extensive simulations, we show that our schemes perform very well in practice, giving costs within $5\--15$\% to the optimal, and reducing the network load at a central server by as much as $55$\% with only a marginal blow up in the limits. Also we demonstrate that our approach out-performs a non-cooperative caching mechanism by about $20\%$.
\end{abstract}



\vspace{-.05in}

\section{Introduction}\label{sec:intro}

The internet has seen a massive growth in traffic in recent years, and
CDNs (Content Distribution Networks) play an important role in increasing
the effective network bandwidth and reducing content
access latency. A significant percentage of internet traffic is now driven by
{\em bulk traffic}, i.e. large-sized files, such as video, software patches,
software downloads etc. In this work we explore the design of a CDN - incorporating 
co-operating end-users - for bulk content delivery. 

As a representative example of this class of traffic - we will motivate
our work by focussing on video downloads - driven by the observation that popular
video-sites are  amongst the top (and fastest growing) contributors
to overall Internet traffic.

Video applications are inherently bandwidth hungry, and with the increasing
popularity of full-length, long duration video downloads -  the
management of video traffic has become an important problem in today's
internet. For example, consider a web-site (such as  Netflix)
from which users download
and rent or purchase full-length movies.
Each movie file will have a size of about tens of GBs, and
assuming hundreds of thousands of active users, one can easily
envision the web-site and the network having to serve peta-bytes of data (and increasing) monthly.

A natural direction towards alleviating this problem is to involve end-users in
the content delivery architecture. Web-sites (and the network) can incentivize
and manage users to act as content relays, i.e. requests can be
redirected by the web-site to be served by another user who has a cached copy of the request.
Such p2p based systems not only reduce the load on the content servers, but as
has been pointed out by some recent work, they can also reduce
load on the network (and enhance end-user performance) by exploiting the geographical
locality of content~\cite{Zink:2009} (and careful peer selection). There
exist several practical examples of this approach - web-sites
that push out large software updates (e.g. Windows patches) to
millions of users~\cite{Abiteboul07,Gkantsidis},
and BBC's IPlayer\footnote{www.bbc.co.uk/iplayer/} video service have
employed p2p based techniques to propagate content across their users.

Video files naturally lend themselves to a peer-peer based approach for
content delivery. Video content is typically held in the end-user cache for
an extended period of time (for later/multiple viewings or simply as part of the
users library of content). Also - in case of systems such as TiVo or
video-on-demand services, the ISP or the content provider directly manage the
hardware cache. For these reasons, a carefully managed system (with appropriate
incentives for end-users) through which locally cached
content can be used as a content source for other users seems entirely
feasible and appropriate - in terms of potential gains.

Managed peer-peer content delivery has been a studied problem in literature
e.g. Telco-managed p2p TV~\cite{Meeyoung08},
CDN with managed {\it swarms}~\cite{Peterson:2009},
ISP managed p2p systems~\cite{James:2009}. 
With the rise of data hungry mobile devices (e.g. IPhone users), there is increased importance of Wi-Fi hot-spots used to ``side-load" content on to the user device, faster and cheaper than over the cellular data network. Going forward - such Wi-Fi hot-spots may have the ability to locally cache (to reduce back-haul costs to the hot-spots) and serve rich content. Hence, we believe, another important use-case is that of content delivery over a network of ``info-stations" at the edge of the network. Such a content-delivery model has also been explored earlier by several works, e.g. the Rutgers Infostations project~\cite{Infostations}, the Drive-thru-Internet~\cite{Ott04} and MIT CarTel~\cite{1161097} projects (in contexts such as opportunistic content-delivery to moving vehicle), and for low-cost rich content delivery on mobiles (e.g. the mango project~\cite{mango09}).

A managed p2p system of the kind we consider will
comprise of a set of peer caches and one or more content servers. The end-user connects
to one of the caches for content download. If the data is cached locally, it will
be served immediately, free-of-cost. Otherwise, the request will be forwarded to
the content server which will figure out the cache that retains a copy
of the content and the download will resume in a p2p fashion.

Once the end-nodes are involved in the caching process, the content
server needs to be even more mindful of the local constraints
while making the global decisions e.g. constraints like node failures,
storage capacity at the end-nodes become critical for survivability of the
network. These issues have been studied throughly in the literature of
delay tolerant networks\footnote{http://en.wikipedia.org/wiki/Delay-tolerant\_networking}.
In this work, we highlight another important aspect of end-node caching
that has been previously ignored in the literature --
given the deployment location of the end-nodes (homes, cafe, shops, malls etc.),
it is likely that the back-haul connection will be a {\it commodity broadband plan}.
Below we highlight some important facts regarding the commodity broadband
plans that are of use today --

\begin{list}{\labelitemi}{\leftmargin=0.5em}
\item A significant number of broadband plans are defined by a
{\it tiered} cost model i.e. the end user pays a fixed amount
every month with a usage limit, typically in the order of GBs. When this
capacity is exceeded, the user will pay a {\it penalty} in terms of per-byte
charge. Such a price model has been introduced/considered by US-based cable
broadband providers like Comcast, Time Warner~\cite{GigaOm}.
Similarly in India, major ISPs like Airtel, BSNL have a range of
tiered plans, with usage caps ranging from 1GB to 75GB.
Finally, such limits and rate plans will be the norm for the increasing popular {\em mobile} broadband connections.
\item In practical deployments, the content server has a high data rate connection that
is bought in bulk from the service provider and hence has typically a lesser penalty
than the content caches.
\end{list}

As per the facts mentioned above, the end-nodes are likely to have a
monthly limits on the amount of uploads and downloads that can be
allowed {\it free-of-cost}\footnote{With a fixed monthly bill.}.
In practice, not every node will reach this limit.
This scenario gives rise to an
interesting data management problem where a {\it missed} request for
a cache can be opportunistically routed to another one that has a spare capacity to
serve it, instead of always fetching it from the central server.
In this paper, we formally model a system of cooperative caches
with {\it tiered} cost function.
The unique cost structure that we explore in this paper adds a novel
angle to the existing work on data placement problem. Our cost
structure can be thought of as being \emph{stateful}; i.e. the cost of serving
a request depends on whether the cache has exceeded its limits (upload and download) or
not (cost is positive if the limit is exceeded, else zero) which, in turn, depends
on the set of requests served by the cache till that point of time. This is in stark
contrast with the {\em stateless} model where we are given a flat cost optimization
problem which has a constant cost per serving, irrespective of the download/upload limits.
We believe that the new cost model we consider in this paper, adds an intrinsic difficulty
in the caching problem.

%

\cut{
 Wi-Fi hot-spots used to ``side-load" content on to the
user device, faster and cheaper than over the cellular data network. Going
forward - such Wi-Fi hot-spots may have the ability to locally
cache (to reduce back-haul costs to the hot-spots) and serve rich
content. Hence, we believe, another important use-case is that of content delivery
over a network of ``info-stations" at the edge of the network. Such a content-delivery model
has also been explored earlier by several works, e.g. the Rutgers
Infostations project~\cite{Infostations}, the Drive-thru-Internet~\cite{Ott04}
and MIT CarTel~\cite{1161097} projects (in contexts such as opportunistic
content-delivery to moving vehicle), and for low-cost rich content delivery
on mobiles (e.g. the mango project~\cite{mango09}).

\comm { Motivate CDN with usage cap, cache replacement only at epochs?, server has more cost }

We consider the problem of reducing the cost of content delivery
through a managed network of co-operative caches in the edge of a network. The caches could either be home computers, consumer devices with storage (e.g. digital video recorders like TiVo), WiFi access points with the ability to cache content etc. As with any system of networked caches, the participating nodes have storage capacity and a network back-haul link. However, since the caches are at the edge, the cost of serving content locally is assumed to be zero (i.e. via local LAN, WiFi or Bluetooth). Also, the end-nodes will use a retail broadband connection in a home or enterprize setting, and hence, the network costs will usually be defined by a {\em tiered}\footnote{Broad-band tariffs we have come across are usually restricted to two tiers, i.e. fixed monthly tariff until a usage limit as the 1st tier, and a subsequent per byte charge in the 2nd tier.} broadband cost model, i.e. the end user pays a fixed amount every month with a usage limit (typically in the order of GBs). When this cap is exceeded, the user will pay a per-byte charge\footnote{e.g. Such a price model has been introduced/considered by US-based cable broadband providers Comcast and Time Warner, see~\cite{GigaOm} for a discussion on this topic. Similarly in India, major ISPs like Airtel, BSNL have a range of tiered plans, with usage caps ranging from 1GB to 75GB. Finally, such limits and rate plans are/will be the norm for mobile broadband connections}. At the same time, in practice, not all users will reach the cap - and the goal of this paper is to design a co-operative caching technique that can exploit the unused capacity at end-points to bring down the overall content-delivery costs.

Such a network of co-operating caches finds application in several practical scenarios. Consider a web-site (e.g. Netflix, or Blockbuster) from which users download and rent or purchase full-length movies. Each movie file will have a size of about tens of GBs, and assuming hundreds of thousands of active users, one can easily envision the web-site having to serve peta-bytes of data monthly, over a fat expensive pipe. A natural direction for the serving web-site to alleviate network costs would be to incentivize and manage users to act as content relays, i.e. requests can be redirected by the web-site to be served by another user who has a cached copy of the request. In fact web-sites that push out large software updates (e.g. Windows patches) to millions of users, do employ such a peer-peer approach to propagate updates across users.

With the rise of data hungry mobile devices (e.g. IPhone users), there is increased importance of Wi-Fi hot-spots used to ``side-load" content on to the user device, faster and cheaper than over the cellular data network. Going forward - such Wi-Fi hot-spots may have the ability to locally cache (to reduce back-haul costs to the hot-spots) and serve rich content. Hence, we believe, another important use-case is that of content delivery over a network of ``info-stations" at the edge of the network. Such a content-delivery model has also been explored earlier by several works, e.g. the Rutgers Infostations project~\cite{Infostations}, the Drive-thru-Internet~\cite{Ott04} and MIT CarTel~\cite{1161097} projects (in contexts such as opportunistic content-delivery to moving vehicle), and for low-cost rich content delivery on mobiles (e.g. the mango project~\cite{mango09}).

In summary, in both the cases above, given the large number of end-points, there will clearly be value for the caches to co-operate and reduce load on the central services. Moreover, given the nature of deployment locations of such caches (homes, cafes, shops, malls, bus-stations etc.), it is highly likely the back-haul will be a commodity broadband connection (DSL, 3G) that will impose a tiered cost tariff.

}

\subsection{Contributions}

In this paper we formally model a system for
sideloading mobile content, and we make the following contributions:

\begin{list}{\labelitemi}{\leftmargin=0.5em}

 \item To the best of or knowledge,we are the first to model the back-haul
costs of the caches as a {\em tiered} cost function
(as is the case in practice), and show that
it gives rise to a novel problem previously unstudied in literature.
 \item We prove the problem to be NP-hard, and propose an algorithm for (near) optimum
placement of data on the caches. Our technique employs both a static as well as a dynamic
component. Based on the request pattern, we statically place data on the caches
to minimize the serving cost and supplement it with a cache replacement
strategy to adapt to a dynamically changing stream of requests.
\cut{that aims to achieve the lowest cost of placing and serving data from the system of caches, over a stream of requests dynamically changing in time.
}
 \item We present a  multi-criteria approximation algorithm for this problem, based on a
novel LP rounding technique. Our algorithm produces a data placement that is
within a constant factor of the optimum solution and results in a small (constant factor)
blow-up in stroage and upload limits of each cache. In experiments, our schemes give costs
close to the optimum (within $5\--15$\%), with a marginal blow up ($< 1$\%) of the limits.
\cut{
 that has a small (constant factor) blow-up in storage
and upload limits of each cache and gives a data placement within a
constant factor of the optimum. In experiments, our schemes give costs
close to the optimum (within $5\--15$\%), with a marginal blow up in the limits.
}
 \item Further, to speed up the solution we cluster caches into groups,
solve the data placement problem within a group, and combine the results in the
rounding phase to get the global solution. Experiments suggest that computationally
tractable cluster sizes come well within $10$\% of a best possible solution.
\cut{
 \item To further reduce computational time, we also develop a fast heuristic that greedily places the data, while staying within storage and upload/download limits. The heuristics on average provides solutions $10\--30$\% more expensive than the rounding approach described above, but achieves very fast execution times.
}
 \item  We demonstrate that co-operative caching with the tiered network cost model
brings down the network delivery at a central server by as much as $55\%$,
and is $20\%$ better than a non-cooperative strategy.

\end{list}

\cut{
Distributed, co-operative content-delivery has been very successfully
exploited by peer-peer applications (such as BitTorrent). However,
the model under which they operate is different \-- there is no
centralized control, and any management of the end-node resources is purely
local (e.g. cap outgoing BitTorrent sessions to 32Kbps). In the model
we propose, a central entity collects the end-node information, and while
mindful of the local constraints, makes optimum decision for the entire system.
The notion of co-operative caching has also been studied extensively in the
literature before in the context of co-operating web proxies, or content
distribution networks (such as Akamai). However, in our work, we assume the
caches to be end-nodes, and the accompanying model
for network bandwidth costs (tiered broadband model) is very different
from the cost model assumed
in previous studies (typically, per byte costs). As we will explain
in detail later, this tiered cost model (particularly relevant
for caching on end-nodes) creates a fundamentally different problem.
}

\section{RoadMap}
The paper is organized as follows. In Section~\ref{sec:model} we formally
describe our assumptions and the system model and formulate the resulting
optimization problem. We then survey the related work in this
area in Section~\ref{sec:related}. Section~\ref{sec:algos} presents
algorithms for a simplified data placement problem and analyze their properties
and performance. In Section~\ref{sec:online} we present the complete solution
for the data placement problem.
\cut{In Section~\ref{sec:ext} we present some extensions to bring down the computational time of attaining the solution and also compute the initial cost of data placement. }
We follow it up with a rigorous evaluation through simulations in
Section~\ref{sec:simulations}. Finally we conclude with a summary of the
work and future directions in Section~\ref{sec:conclusions}.
\section{System Model and Problem Formulation}
\label{sec:model}

\begin{figure}[t]
\begin{center}
\includegraphics[width=.32\textwidth] {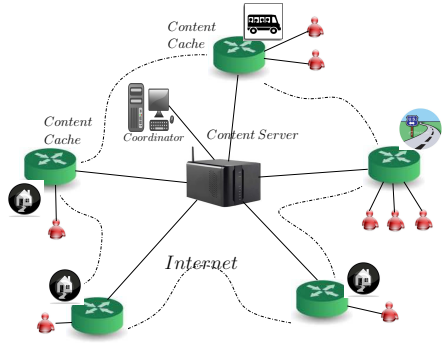}
\end{center}
\vspace{-.2in}
\caption{A system of distributed, co-operating caches}
\vspace{-.2in}
\label{fig:sysmodel}
\end{figure}

Our system of co-operating caches consists of three components
- \emph{content caches} at the end-nodes, a \emph{content server}
and a \emph{coordinator}.  Users are associated with one or more content
caches, and can either access it directly (the cache is a disk on the
user's home computer), or via Ethernet LAN, WiFi or Bluetooth (the
cache as a home networked storage device, or a public hot-spot).

The \emph{content server} is a repository of all items requested, uploaded and
shared by users. Any request at a cache is either served locally, or via
the content server, or by any other content cache. The {\em coordinator}
also keeps track of the user behavior and the state of each cache (items stored,
spare storage and capacity on the back-haul links).

A user's content viewing habit will consist of sites visited
regularly for popular content - e.g. online newspapers,
video sites (such as TV episodes on Hulu), or recently popular user
generated content on YouTube. In general, we believe there will be a strong
predictive element to
the popularly requested items in the system, and our system is
designed to exploit this characteristic.

Based on usage statistics at each cache,
sharing and mobility patterns of users, popularity of content items etc., periodically
(over a time frame referred to as an \emph{epoch}) the coordinator
{\em predicts} the demands at each cache. Then, based on these predictions in each epoch,
it i) places the items at each cache and
ii) redirects subsequent user requests to suitable content caches, or to the content server.

The specific techniques required to predict user requests are
an active and important area of research (and has been pursued
in other works~\cite{youtube-09,pred-kdd07}), but is orthogonal and beyond the
scope and goal of this paper.\footnote{However we would like to re-iterate there will be a naturally strong predictive element to the request
workload in our target applications. For example, video web-sites such as YouTube, Netflix etc. already compute and maintain detailed statistics on what is currently popular - and these statistics remain stable across the time-scale of decision-making assumed in our algorithms\cite{Cha:2007}.}

Also, typically, items popular in an epoch will retain their popularity for 
a period of time\cite{Cha:2007}. This can improve caching performance as 
the existing set of items in the cache need not be brought from 
the server in subsequent epochs. Hence, our system is designed to make data 
placement decisions keeping in mind the items already present in caches from previous epochs.

\subsection{Problem Formulation}\label{sec:prob}

\cut{
If a request
can't be served locally, the corresponding \emph{Content Cache} will access
the coordinator to identify the best location to download it from.
}

Consider a network of $n+1$ nodes $C=\{C_{0\cdots n}\}$.
Each node $C_i$ ($i > 0$) represents a content cache.
$C_0$ represents the content server 
\cut{Each node is connected to the
Internet, which allows it to send or receive data from any
other node in the network.
}
For cache $C_i$, let $u_i$ and $d_i$ specify the limit on data that
can be uploaded/downloaded at the cache \emph{free-of-cost}
\footnote{i.e. beyond the fixed monthly bill.}. For any data
transfer beyond the prescribed limits, $\alpha_i$ and $\beta_i$ denote
the price \emph{per-byte} paid by the cache for exceeding upload and download
limits respectively. Let $s_i$ denote the amount of storage at each cache.
%
%
Table~\ref{tab:notation} summarizes the set of symbols for our problem formulation.

\cut{Another system parameter we consider relates to $u_i$ and $s_i$ at each cache.
The system has some flexibility with the storage space (that can be
utilized for co-operative caching) at each cache, whereas
\cut{For example, a website can incentivize a home to set aside a larger amount
of their local storage to serve content to other users. Similarly,
public hot-spots can be deployed with higher storage capacities.}
the upload
limit of a cache is regulated by the broadband plan at each node, and
determines the amount of data that can be served by a particular cache to
other nodes. In our system, typically, $s_i > u_i$, with the
additional storage quantity to be used for serving repetitive local
requests. Intuitively, it makes sense to set aside a larger storage for
caches which have higher upload limits in order to facilitate co-operative
caching. In this work, we make the design decision
to assign storage proportional to upload limits, i.e. we assume,
$s_i \propto u_i$, and $u_i/s_i = \tau$,
where $\tau(<1)$ is constant across all caches of the system.
}

\begin{table}[ht]
\caption{List of Symbols}
\centering
\begin{tabular}{|c | l|}
\hline
$n$ & \emph{number of caches} \\
\hline
$m$ & \emph{number of objects}\\
\hline
$C_{0\cdots n}$ & \emph{set of caches} \\
\hline
$o_{1\cdots m}$ & \emph{set of objects} \\
\hline
$C_0$	& \emph{content server}\\
\hline
$s_i$ & \emph{storage capacity at Cache} $C_i$\\
\hline
$u_i$ & \emph{upload limit of Cache} $C_i$\\
\hline
$d_i$ & \emph{download limit of Cache} $C_i$\\
\hline
$\alpha_i$ & \emph{price per unit of upload exceeding $u_i$}\\
\hline
$\beta_i$ & \emph{price per unit of download exceeding $d_i$}\\
\hline
$r_{ik}$ & \emph{demand of object $o_k$ at cache $c_i$}\\
\hline
\hline
\end{tabular}
\label{tab:notation}
\end{table}

Let $\{o_1,o_2 \cdots, o_m \}$ be a set of $m$ data objects
that are requested at the caches
over an \emph{epoch}\footnote {We will use the same notation $o_i$
to refer to both the object and its size.}.
In practice, accurate prediction of
the time ordered sequence of requests at every cache may not be feasible. Instead, we assume the
knowledge of an expected/average demand $r_{ik}$ of object $o_k$ at cache $C_i$.
As mentioned earlier, coarse level predictions of this kind is possible by mining
user preferences and mobility patterns.
However, the description of the specific prediction technique is beyond the scope of this
paper.
%
%

At any moment of time, let $\hat{d_i}$ and $\hat{u_i}$ denote the amount of
content that has been downloaded and uploaded respectively by
cache $C_i$, since the beginning of an epoch.
A new request for content $o_k$ at $C_i$ will incur zero cost
if served locally. Otherwise, cache $C_i$ will forward the query
to $C_j$ which has a copy of $o_k$ and fetch it from there.
This operation will increase both $\hat{d_i}$ and $\hat{u_j}$
by the size of $o_k$.
If no cache has $o_k$ available then it will be downloaded from the server.

When all requests are served, any cache $C_i$ incurs
a cost of $\max\{\hat{d_i}-d_i, 0\}\cdot \beta_i$
and $\max\{\hat{u_i}-u_i, 0\}\cdot \alpha_i$
accounting for the extra downloads and uploads, respectively.
The overall cost of content
delivery\footnote{Again, discounting the fixed monthly bill since it appears as an additive
constant.} in the
system is thus $\sum_{i=0}^n \{\max\{\hat{d_i}-d_i, 0\}\cdot \beta_i + \max\{\hat{u_i}-u_i, 0\}\cdot \alpha_i\}$.
We now formally define the data placement problem studied in this paper.

\cut{object $o_k \in O$, has a specified demand $r_{ik}$ at cache $C_i$
There are $m$ data objects $o=\{o_{1\cdots m}\}$ in the system, we will refer to both the object and its size by $o_i$. At each cache $i$, we know the (aggregated) number requests $r_{ik}$ for object $k$.
The cache can serve each of these requests either from its local storage (if it has the object) or by downloading it from one of the other caches. We assume that the server stores all the objects, hence each request can be served at least by getting it from the server.
}
\cut{
At cache $i$, let $\hat{d_i}$ and $\hat{u_i}$ be the amount downloaded and uploaded, respectively, to serve all the requests in the system. If a request for object $k$ at node $j$ is served by downloading from cache $i$ ($i\neq j$), it increases $\hat{d_j}$ and $\hat{u_i}$ by the amount $o_k$. Note that if the object was locally cached ($i=j$), there is no increase in the data transfered.
}
\cut{
When all the requests in the system are served, the node $i$ has to pay a cost of $\max\{\hat{d_i}-d_i, 0\}\cdot \beta_i$ and $\max\{\hat{u_i}-u_i, 0\}\cdot \alpha_i$ for the extra downloads and uploads, respectively. The overall cost of the system is $\sum_{i=0}^n \max\{\hat{d_i}-d_i, 0\}\cdot \beta_i + \max\{\hat{u_i}-u_i, 0\}\cdot \alpha_i$.
}


\medskip
\noindent {\bf Data Placement Problem:} Given the set of caches $\{C_{0...n}\}$ and
demands $R = \cup_{j,k} \{r_{jk}\}$, find the placement of
items $\{o_{1...m}\}$ into caches and a query forwarding policy that satisfies all the
requests in the system with minimum total cost.
\medskip
\vspace{-0.1in}
\subsection{Solution Approach}

We prove that the data placement problem is NP-hard by reducing
it from the \emph{partition problem} $P(O,W,k)$,
defined as follows: Given a set objects $O = \{o_{1\cdots n}\}$ of
weights $W=\{w_{1\cdots n}\}$ ($w_i \in Z^+$)  and a
integer $t<n$, is there a subset $T\subseteq O$ of size $|T| = t$,
such that $\sum_{i\in T} w_i = \sum_{i\notin T} w_i$?
%
We state the following result and refer the proof to Appendix~\ref{app:np-proof}.
%
\begin{lemma}
The data placement problem defined above is NP-hard.
\end{lemma}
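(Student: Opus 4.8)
The plan is to give a polynomial-time reduction from the partition problem $P(O,W,t)$ to the decision version of the data placement problem, arguing that the constructed instance admits a placement together with a routing of total cost $0$ if and only if the partition instance is a ``yes'' instance. Write $S=\sum_{k} w_k$ and assume $S$ is even (otherwise no balanced partition exists and I output a fixed no-instance). From $(O,W,t)$ I would build an instance with $n$ data objects $o_1,\dots,o_n$, each of \emph{unit size}, together with four caches: the server $C_0$, two ``storage'' caches $A$ and $B$, and one ``client'' cache $X$. The client demands each object in proportion to its weight, $r_{Xk}=w_k$, has no storage ($s_X=0$), and a download limit large enough that its downloads are always free; thus $X$ must fetch every object from elsewhere. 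The server is made useless by setting its free upload limit to $0$ with a positive penalty $\alpha_0>0$, so any byte served from $C_0$ strictly increases the cost.

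The crux of the construction is to decouple the two parts of a balanced partition -- the cardinality constraint $|T|=t$ and the weight-balance constraint $\sum_{k\in T} w_k=S/2$ -- onto two different resources. Since each object has unit size, \emph{storage} counts objects, so I set $s_A=t$ and $s_B=n-t$; because all $n$ unit objects must be stored (the server being too expensive) and the total storage is exactly $n$, every object is stored exactly once, with precisely $t$ of them in $A$. This makes $T:=\{k: o_k \text{ is placed in } A\}$ a subset of size exactly $t$. \emph{Upload}, on the other hand, counts weight: serving object $o_k$ from a storage cache sends $r_{Xk}\cdot o_k=w_k$ bytes, so $\hat u_A=\sum_{k\in T} w_k$ and $\hat u_B=\sum_{k\notin T} w_k$. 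Setting $u_A=u_B=S/2$ and $\alpha_A=\alpha_B=1$, the total cost becomes $\max\{0,\,x-S/2\}+\max\{0,\,(S-x)-S/2\}=|x-S/2|$, where $x=\sum_{k\in T}w_k$. This equals $0$ exactly when $x=S/2$, and is at least $1$ otherwise by integrality of the $w_k$.

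It then remains to verify both directions. For the forward direction, a balanced size-$t$ subset $T$ yields the placement ``$A\leftarrow T$, $B\leftarrow O\setminus T$'' with each request routed to the unique cache holding the object, giving $\hat u_A=\hat u_B=S/2$ and hence cost $0$. For the converse, I would argue that a cost-$0$ solution cannot route any byte through $C_0$ (as $u_0=0$, $\alpha_0>0$), so every positively demanded object is stored in $A\cup B$; the tight total storage then forces exactly $t$ objects into $A$, and vanishing upload penalty forces $\hat u_A\le S/2$ and $\hat u_B\le S/2$, which together with $\hat u_A+\hat u_B=S$ gives $\hat u_A=\hat u_B=S/2$ -- i.e. $T$ is a balanced partition of cardinality $t$. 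Since every quantity in the construction has size polynomial in the input, the reduction runs in polynomial time.

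The main obstacle I anticipate is ruling out clever routings that sidestep the intended placement semantics. Two points need care. First, that storage being \emph{exactly} tight forbids storing any object at more than one cache, so the upload incurred by each cache is fully determined by the placement (no request for $o_k$ can be offloaded to a second holder). Second, that the cost model as stated charges only for serving the demand $R$ and treats the initial placement of objects into caches as free -- this is the interpretation I will adopt, and it guarantees the server is never needed merely to seed the caches, keeping the clean threshold at cost $0$. With these two observations the optimum is $0$ precisely for yes-instances, completing the reduction.
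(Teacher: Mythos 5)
Your construction shares the paper's skeleton: the same reduction from the cardinality-constrained partition problem, a four-node gadget with one demand-bearing client of zero storage, two helper caches whose storage limits encode $|T|=t$ and whose upload limits encode the weight balance. The genuine gap is the interpretive choice you flag at the end: you assume ``initial placement of objects into caches is free,'' but the lemma is about the data placement problem of Section~\ref{sec:model}, whose cost counts \emph{all} transfers in the epoch. The paper itself makes this explicit in Section~\ref{sec:algos}, where it splits the cost into \emph{initial placement} plus \emph{serving} cost and calls the variant that drops the former a \emph{simplified} problem; correspondingly, the paper's own reduction budgets for seeding by giving the server $u_0=m$ free uploads and the helpers download limits equal to their storage ($d_2=s_2=t$, $d_3=s_3=m-t$). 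Under the actual cost model your gadget breaks immediately: with $u_0=0$ and $\alpha_0>0$, seeding $A$ and $B$ already costs at least $n\alpha_0$, so even yes-instances have positive cost and your zero-threshold equivalence disappears. What you have proved is NP-hardness of the serving-cost-only variant, and that does not transfer to the stated problem without a further reduction.

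The repair is not cosmetic. If you grant the server a seeding budget ($u_0=n$) and give $A,B$ download limits $t$ and $n-t$, a new loophole opens: a solution may spend part of the server's budget serving some object's requests \emph{directly} instead of seeding it, which invalidates your ``tight storage forces each object stored exactly once in $A\cup B$'' argument. Concretely, take weights $(2,2,2,1,1)$ with $t=1$: no single object has weight $S/2=4$, so this is a no-instance, yet the repaired gadget admits a zero-cost solution --- serve both unit-weight objects straight from the server ($2$ uploads) and seed the three weight-$2$ objects ($3$ uploads, total $5=u_0$), placing one in $A$ (upload $2\le 4$) and two in $B$ (upload $4\le 4$). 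The paper blocks exactly this by doubling the demands, $r_{1k}=2w_k$: then serving any object directly consumes at least $2$ server uploads versus $1$ for seeding it, so a zero-cost solution must seed every object exactly once, and the partition correspondence is restored. This doubling device (or an equivalent) is what your construction is missing.
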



We now aim to find a good approximation algorithm to our problem.
%
%
Due to the similarity of our problem with non-metric facility location
problem~\cite{flt-aaim-06}, (which is known to be $O(\log(n))$ hard to
approximate) we believe that our problem should be hard to approximate within
a constant factor. Therefore, from a practical point of view, it would be useful
to find a solution which is within constant factor of the optimum
but stretches the storage and upload limits only by a
constant factor. Such multi-criteria approximation has been
adapted by other researchers~\cite{gm-soda-02} as well. With this motivation,
we design a multi-criteria approximation algorithm for the data placement problem.
\cut{
Our problem has obvious similarities to the vast literature on data placement problems. However, what differentiates it with existing techniques is the unique cost structure, as a consequence of considering tiered back-haul connections at caches.
Specifically, the cost of serving a particular request depends on whether the requesting (serving) cache has exceeded its download (upload) limit or not (i.e. the cost is positive if the limit is exceeded, else zero); and hence depends on the \emph{previous} requests served by the system.
For a given configuration (that specifies which limits are exceeded and which are not), our problem simplifies to a (existing) flat cost optimization, but since the configuration changes frequently over the course of algorithm, it is not a feasible solution approach.
This has some similarities to the facility location problem with non-metric cost structure~\cite{flt-aaim-06}, which are known to be $O(\log(n))$ hard to approximate.
Hence we believe our problem with seemingly more complicated cost structure should also be hard to approximate within a constant factor.

From a practical point of view, it would be more useful to design a (constant factor) multi-criteria approximation. This implies finding a solution that exceeds the upload and storage limits only by a constant factor and the cost is within a constant factor of the optimal solution for the original problem.
Other researchers~\cite{gm-soda-02} have also adapted similar approximation to hard problems.
With this motivation, we design our technique to get a multi-criteria approximation to the data placement problem (in Section~\ref{sec:algos}).
In our experiments we notice that, in practice, the cost of the solution is very close (within $10\%$) to the optimal, while on average staying within the upload limits and at most $5\%$ increase in the storage.
}

\cut{
This problem is in some sense similar to the existing data placement problems {\bf references} with a
different cost structure. Cost of an a particular access is really dependent upon number of previous
accesses (e.g., if they exceed upload limit the cost is positive, otherwise zero). Notice that the data
 placement problems with out upload/download limits and non-metric costs are $\log{n}$-hard to approximate
. Hence we expect our problem with seemingly more complicated cost structure also is hard to approximate
 with in a constant factor. From a practical stand point, it makes more sense to come up with a
 multi-criteria approximation for the same problem. This means that we will allow the upload/storage limits
to go up by a constant factor and find a solution whose cost is with in a constant factor of the solution
 for the original problem. From our experiments we can see that this approach, in practice, gives very good
 cost benefits by only a small increase in upload and storage limits (note that we are not as concerned
by increase in storage limits as the increase in upload limits). }

\vspace{-.1in}
\section{Related work}\label{sec:related}

Distributed, co-operative content-delivery~\cite{Balke:2005:PDT:1053724.1054042} 
has been very successfully
exploited by peer-peer applications 
(such as BitTorrent\footnote{http://en.wikipedia.org/wiki/BitTorrent}). However,
the model under which they operate is different \-- there is no
centralized control, and any management of the end-node resources is purely
local (e.g. cap outgoing BitTorrent sessions to 32Kbps). In the model
we propose, a central entity collects the end-node information, and while
mindful of the local constraints, makes optimum decision for the entire system.

The notion of co-operative caching has also been studied extensively in the
literature~\cite{kd-tkde-02,gkk-soda-00} before in the 
context of co-operating web proxies, or content
distribution networks (such as Akamai). However, in our work, we assume the
caches to be end-nodes, and the accompanying model
for network bandwidth costs (tiered broadband model) is very different
from the cost model assumed
in previous studies (typically, per byte costs). 
This tiered cost model (particularly relevant
for caching on end-nodes) creates a fundamentally different problem.

%

Baev \etal~\cite{br-soda-01, brs-siam-08} study the 
data placement problem, where the aim is to map items 
and requests to caches to minimize the overall cost. However, unlike us, they consider 
only storage capacities, and not the upload/download capacity at caches.
%
%
The paper closest related to our work is by Guha \etal~\cite{gm-soda-02}, where 
they consider the limit on number of users (or requests) that can be 
mapped to any cache. The crucial difference from our work is the 
assumption that each request served from a cache adds to the constraints or the 
cost (or both), while in our problem the requests served \emph{locally} comes for free.
This changes the problem structure significantly since in our case the optimal 
solution may serve arbitrary number of requests for free (locally), and makes 
it non-trivial to approximate using existing techniques.

Another related body of work is on data placement on parallel 
disks~\cite{gkk-soda-00,kk-ja-06}, where the caches (media servers) have 
storage and load capacities and the aim is to map items and requests 
to caches; however, the aim here is to maximize the number of requests 
served, rather than the cost of serving all requests. Cooperative (proxy) 
caching has also been used for minimizing traffic 
in WWW~\cite{kd-tkde-02,rpr-soda-99}, however, again they do 
not consider upload/download capacities at caches.

Several researchers~\cite{afm-nsdi-05, ird-podc-02} have analyzed the 
performance of peer-to-peer transfers for cooperative file transfers, 
downloading software patches, etc. Like us, they also use 
items (or parts of items) stored at one peer to serve request 
of another. However, since by design these systems are \emph{unmanaged}, 
they mostly focus on either incentive mechanisms (to ensure that peers share) 
or in assisting users to pick ``close by'' peers to download 
from~\cite{krp-imc-05,xyk-sigcomm-08}. In contrast, we have a managed 
system where a central controller predicts requests (based on historical data), 
takes into consideration the limits at caches and makes optimal 
data placement and routing decisions for the entire system.

Several projects~\cite{Ott04,1403006,Hadaller07,Haggle06}
have investigated issues emerging from serving content through 
a network of short-range hot-spots.
Perhaps because the common use-case considered is to provide 
opportunistic Internet access from within moving vehicles, the primary focus 
of such work has been on the performance of the 
wireless link between the user and the hot-spots. 
%
\section{Solution For a Simplified Data Placement Problem}
\label{sec:algos}


The aim of the data placement problem is to minimize the cost of 
satisfying all requests. This cost can be conceptually broken 
into two parts, the cost of putting the desired items in the caches at the
beginning of an epoch
and the cost of serving the requests (using the caches and the
content server) throughout the epoch. We call them \emph{initial placement} and \emph{serving} 
costs, respectively\footnote{The entire initial placement need not 
happen before serving begins, an item may actually be brought 
to the caches during its first request.}. For ease of exposition, we 
start with solving a simplified problem that ignores the initial placement 
cost and optimizes only the serving cost\footnote{Strictly speaking, the 
problem is even more non-trivial, due to the items cached from previous 
epoch (hence requiring cache \emph{replacement}); we will visit this 
issue in Section~\ref{sec:online}.}. The solution of the 
original problem builds upon the steps of this formulation, 
and we will discuss it in Section~\ref{sec:online}. Throughout this 
section, we will discuss the minimization of serving cost only
and refer to it as data placement problem. We will also work with
unit sized objects to keep things simple. However, our
algorithms are easy to modify to account for non-uniform object sizes.

\cut{From a practical point of view, our data placement problem can be thought of as a two step 
process. The first part is to identify the set of items to place in the caches
such that it minimizes the data serving cost. We capture this cost component
in the above formulation. The other part consists of moving
objects from the central server to their respective cache locations(e.g. if
$y_{ik}$ is set to $1$, we need to move the object $o_k$ from the
server to cache $C_i$) - we name it as
the initial placement cost. Ideally, one would like to
minimize the sum over these two cost components. However, for ease of exposition
and as a matter of technical convenience, we ignore the initial placement cost
for the time being and work with the present formulation. We address this 
issue in Section~\ref{sec:online}, where 
we present the complete solution.}

We will now describe our technique to solve the simplified data placement problem and get a multi-criteria approximation.
%
The basic idea is to write the problem as an ILP (integer linear program), solve the relaxed LP to obtain a fractional solution to the data placement problem, followed by a novel rounding approach that ensures a constant factor increase in storage and upload limits that is within a constant factor of the optimum ILP cost.
\cut{
To make things simple, we ignore the cost of initial placement of
items into caches. We address this issue again in Section~\ref{sec:online},
where we discard this assumption and present the complete solution.
}

\vspace{-0.1in}
\subsection{Overall Approach}
\label{sec:overall}
%

\cut{Before going in to the details of the algorithm, we give the intuition behind it. Restating our problem, we have one main server $n$ caches with upload limits $u_i$, download limits $d_i$, storage limits $s_i$. Given set of requests $\{r_{j,k}\}$ at each cache $C_j$, we want to find object placement strategy so that the caches exceeds upload and storage limits by a constant factor and the total cost is with in constant factor of optimal cost of the original problem. The cost here is sum total number of requests going to the main server and the amount of download limit violations at each cache. }

We will now describe the overall structure of our approach.
Our first observation is that if we relax the storage limits slightly, the 
download costs are very easy to handle. For cache $C_i$, we will just use 
an extra storage of $s_i$ (i.e. double the storage of cache $C_i$) 
to store the most frequently 
requested items at $C_i$. 
%
Next, we consider a variant of our problem where any item (or copy of an item, to be
precise) at cache $C_i$
serves at most $\tau_i$ requests, where $\tau_i = \left(\frac{u_i}{s_i}\right)$. 
We show (in Section~\ref{subsec:tau}) that the solution to this has 
identical cost as the original problem.
The advantage of this approach is that if any solution stays 
within storage limits, it is guaranteed to be within 
upload limits as well (or the upload increases by the 
same factor as storage).
%
Rounding this constrained LP turns out to be considerably simpler, 
as it has no upload limit constraints, and so the main non-trivial 
step required is to handle the local requests while moving items 
across caches during rounding (see Section~\ref{subsec:round}).


\cut{Next we will borrow a clever trick used by {\bf refer to guha and munagala}. Let us say that any copy of an object serves at most $\tau$ requests. By doing so, our problem does not change, except that we might be storing extra copies of objects and hence increasing the storage requirements. But, as we see in {\bf algos section}, this does not increase storage by too much. The advantage in doing this is that, as long as we assure that the storage limits are maintained, upload limits will also be maintained. We then just need to worry about storage limits and rest is taken care.

Now, the problem is formulated as an integer linear program and we solve the linear relaxation of it and round the fractional solution to obtain our solution. Let us say $y_{i,k}$ is a variable that denotes if an object $o_k$ is present at cache $C_i$ in the final solution and $x_{i,j,k}$ is the amount of requests for $o_k$ at $C_j$ is going to $C_i$. In our rounding, we first round $y_{i,k}$s to find the object placements and then round $x_{i,j,k}$s to find how to route the requests to appropriate caches. While rounding we will use a very powerful technique introduced by {\bf tardos} called Generalized Assignment Problem or GAP. We will describe the GAP problem in next subsection.}

%
We now mention a result 
due to Shmoys \etal~\cite{st-soda-93} to solve 
the Generalized Assignment Problem (GAP), which can be used to find 
a feasible solution ($z_{ij}$) to the following equations.
%
%
%
\begin{eqnarray*}
\sum_{i=1}^{m} z_{ij} = p_j \ \ \forall j \in [1,n]; \ & \  \sum_{j=1}^{n} z_{ij} \leq q_i  \ \ \forall i \in [1,m]\\
%
%
z_{ij} \in Z^+ \cup \{0\} &\forall i \in [1,m]; j \in [1,n]
\end{eqnarray*}

In~\cite{st-soda-93}, the authors show how to find an integer solution $\{z_{ij}\}$
given any fractional feasible 
solution $\{\hat{z}_{ij}\}$ for the above problem, where the $q_i$
values increases to at most $q_i+1$.
We will refer to the above rounding algorithm as
$GAP(p, q, \hat{z})$, which takes the set of parameters for GAP and
a fraction solution $\{\hat{z}_{ij}\}$ and returns the rounded integral solution $\{z_{ij}\}$.
Observe an interesting point that unlike the original GAP problem, 
we do not have a cost function to minimize, but to
compute any feasible integer solution.

\cut{
With some effort, we can see here that, while
rounding we are essentially keeping the cost same, but violation one of the upload/storage
constraints by a factor of at most $2$. Let us refer the above rounding algorithm as
$GAP(\{n_j\}, \{b_i\}, \{c_{ij}\}, \{x_{ij}\})$, which takes the set of $c_{ij}, n_j, b_i$ and
also a fraction solution $x_{ij}$ to this LP and rounds $x_{ij}$ to integers and returns
the rounded solution, again as $x_{ij}$.}

\vspace{-0.1in}
\subsection{ILP Formulation}
\label{subsec:ilp}

Before describing the ILP, we will make some observations that significantly 
simplify the structure of the ILP.
First, notice that the server ($C_0$) does not download anything as it 
does not have any local requests to serve, and consequently the values 
of $d_0$ and $\beta_0$ do not play any role in our technique.
Further, as mentioned earlier, the server has a high data-rate connection 
that is bought in bulk from the service provider, and hence has typically a 
much cheaper per byte transmission costs as compared to the caches. 
In our formulation, it essentially means 
that $\alpha_0 < \alpha_i, \forall i>0$. As a consequence, 
once a node's upload limit is reached, instead of further using its 
back-haul (for serving another cache), it will be  cheaper to 
source the request directly from the server.
This also intuitively follows from the motivation of this work as we want 
to use only the \emph{spare} upload and storage capacities at nodes to 
reduce the load on the server; however, once no more spare capacity 
is left, the system will fall back to all caches getting content from the server.

%

Based on the above discussion, our caching problem can be formulated 
as the following integer linear program ($I_1$) --

{\footnotesize $$\min \left\{\alpha_0 \cdot \max\{0, \sum_{j,k}x_{0jk} - u_0\} + 
	\sum_{j \in C} \beta_j \cdot \max\{0, \sum_{i,k; i \neq j}x_{ijk} - d_j\}\right\}$$
\vspace{-0.1in}
\begin{eqnarray}
s.t.
\sum_{j \in C,k \in O; i \neq j}  x_{ijk} \leq u_i &\forall i\in C \\
\sum_{k \in O}  y_{ik} \leq s_i &\forall i\in C \\
\sum_{i\geq 0} x_{ijk} = r_{jk} &\forall j\in C; k \in O\\
0 \leq x_{ijk} \leq r_{jk} y_{ik} &\forall i,j\in C; k \in O\\
%
%
%
y_{ik} \in \{0,1\}, \ \ x_{ijk} \in {Z}^+ \cup \{0\} &\forall i,j\in C; k \in O
%
%
\end{eqnarray}
}

Here $y_{ik}$ denotes that the object $k$ is stored at cache $i$, 
and $x_{ijk}$ is the number of requests for object $k$ at 
cache $j$ that are served from cache $i$. The first two constraints 
express upload and storage limits. The third constraints specifies 
that all the requests must be satisfied, while the fourth makes sure 
that data is only served from a cache that stores it. Here $C$ denotes
the set of caches excluding the content server and $O$ is the set of 
objects.

The cost function has two components, the first is the cost
of serving all the requests from $C_0$ i.e. the content server. The second term 
captures the penalty incurred by the caches
(excluding the content server) for exceeding the download limit, if any.
As mentioned earlier, since $\alpha_0 < \alpha_i$, once the upload limit
is reached for a cache, the extra requests are served by the content server.
Therefore, the upload limits are satisfied as hard constraints and there is
no penalty term associated with the uploads.
It is easy to see that the non-linear terms in the cost function 
can be converted to linear constraints, using standard techniques. 

Note that the download cost must be accounted in the cost function, as it 
may not be possible to get a feasible solution where all requests 
are satisfied without crossing the download limits; 
however, our assumption that \bytecost\ of server is lower 
than those of caches implies that no requests are served by a 
cache after reaching its upload limit.
We now consider both the terms in the cost function in turn. First we
get rid of the second term through the following observation.

\cut{It is to be observed that in the above formulation, we have ignored the initial cost of
moving the objects from the server into the caches
(e.g. if $y_{ik}$ is set to $1$, we need to
move the object $o_k$ from the server to cache $C_i$). We have simplified
this We address this issue in 
Section~\ref{sec:online} where we present this complete solution.
}
\eat
{The last two constraints are for download cost, where $\theta_j$ is the amount of download that happens after crossing the download limit, and the last constraint ensures that if we stay within download limit, the cost should be zero (not negative)\footnote{Note that we have converted the non-linear terms in the cost function of the form $\beta\cdot \max\{\hat{d_i}-d_i, 0\}$ to a fully linear constraints using standard techniques.}.
}

%
%
%

\medskip

\noindent{\bf Handling Download Costs:} We first show how to take care 
of the download limit constraints. For cache $C_j$, 
let $O_j\subseteq O$ be the set of $s_j$ items that has 
the maximum total request ($\sum_{k\in O_j} r_{jk}$) at $j$. 
Clearly, storing these items in $C_j$ minimizes the amount of requests going out, 
and is at most the requests going out of $j$ in the optimal solution of $I_1$.
%
%
%
So we increase the storage at $j$ by $s_j$ (i.e. double the storage at $C_j$) 
and use the extra storage to store the item set $O_j$. 
This ensures the second term in the cost is at most the second 
term in the optimal solution to $I_1$\footnote{Note that we 
do not really need to store $s_j$ items; we can compute 
the optimal solution to $F_1$ and store only $\sum_k y_{jk}$ items at $j$.}.
For the rest of this paper we consider only the first term in the cost function of $I_1$.
After this simplification, our cost function looks like 
$$\left\{\alpha_0 \cdot \max\{0, \sum_{j,k}x_{0jk} - u_0\}\right\}$$

We make a further observation that if one attempts to minimize the number of 
requests received at the content server, then it automatically minimizes the cost of the
first term in $I_1$, and {\it vice versa}. The cost is zero if it is less than $u_0$,
else positive. Ignoring the multiplicative factor $\alpha_0$, our cost function now
has a simple form $\sum_{j,k}x_{0jk}$. In the next subsection we present an
equivalent optimization problem which attempts to minimize this cost function.

\cut{
Consider a solution of $F_1$, and let $o_j$ denote the (fractional) number of items stored at cache $i$ ($o_j = \sum_k y_{ik}$).
The total number of requests going out of $j$ is $R_j - \sum_{i,k; i = j} x_{ijk}\cdot r_{jk}$, where $R_j$ is the total requests on cache $j$ and the second term is the amount served locally due to items served at cache $j$.
Let us consider the item set $O_j$ containing $\lceil o_j \rceil$ items with maximum number of requests ($r_{jk}$ values) at cache $j$. Clearly, the number of requests satisfied locally is at least as much as those in the optimum fractional solution, i.e. $\sum_{k\in O_j} r_{jk} \geq \sum_{i,k; i = j} x_{ijk}\cdot r_{jk}$.
So if we store the best $\lceil o_j \rceil$ items at cache $j$, the amount of requests going out (and hence the value of $\beta_j$) would be at most as much as the solution to $F_1$, while increasing the storage by at most an additive factor of $\lceil o_j \rceil \leq s_j$.

So to take care of download limits, we solve the $F1$, find the sets $O_j$ for each node and store them at node $j$.

This ensures that the download costs paid by $j$ is lesser than that in the optimal (fraction) solution, and needs only $s_j$ extra storage at $j$. For the rest of this paper we consider only the first term in the cost function of $L_1$.
}
\subsection{$\tau$-Constrained Placement}
\label{subsec:tau}

Let us consider a variation of our problem where any item stored at a 
cache $C_j$ serves at most $\tau_j$ number of requests\footnote{Recall that
$\tau_j = \left(\frac{u_j}{s_j}\right)$}. Since the storage at 
cache $C_j$ is $1/\tau_j$ times the upload limit, this immediately implies 
that if any placement stays within the storage limit for a cache, it 
also stays within the upload limit. 
%
The resulting $\tau$-constrained ($\tau$-LP) $I_2$ is as follows.
%
%

{\footnotesize $$\min \sum_{j,k}  x_{0jk} $$
\begin{eqnarray}
s.t.
\sum_{j; i \neq j} x_{ijk} &\leq  \tau_i \cdot y_{ik} &\forall i\in C; k\in O \\
\sum_{k} y_{ik} &\leq  s_i &\forall i\in C \\
\sum_i x_{ijk} & = r_{jk} &\forall j\in C; k \in O\\
0 \leq x_{ijk} &\leq  r_{jk} y_{ik} &\forall i,j\in C; k \in O\\
x_{ijk}, y_{ik} &\in  Z^+ \cup \{0\} &\forall i,j\in C; k \in O
\end{eqnarray}
}
Here the first constraint specifies that each copy of an 
item can serve only $\tau_i$ requests, while the last 
constraint gives freedom to open multiple copies of the same item at a cache.
It is important to note that the restriction of $\tau_i$ is only on the 
requests served to other caches (constraint 6); a single copy of an 
item can serve any number of \emph{local} requests, but 
not more than $\tau_i$ \emph{external} 
requests.


The following lemma shows that a (approximate) solution to $I_2$ 
gives a (approximate) solution to $I_1$ with the same cost. 
%

\begin{lemma} \label{LEM:REDUCTION}
For any $\tau$ s.t. $1/\tau \in Z^+$, any solution of $I_2$ 
with $\alpha$ times the optimum cost can be converted to a solution 
of $I_1$ also of $\alpha$ times the optimum cost, by 
increasing the storage at caches by at most a factor of 2.
\end{lemma}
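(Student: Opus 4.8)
The plan is to prove the lemma by passing between the two formulations in both directions, relying on the single structural fact that $\tau_i = u_i/s_i$ couples the storage and upload budgets: in $I_2$ the only thing that limits how much a cache uploads is mediated through $y_{ik}$, so a placement that respects the storage limit automatically respects the upload limit. Concretely, I would establish two conversions — a \emph{collapse} map $I_2 \to I_1$ that preserves cost and keeps uploads feasible for free, and a \emph{split} map $I_1 \to I_2$ that preserves cost at the price of at most doubling the storage — and then chain them. The split map is what forces the factor-of-$2$ blow-up, and it simultaneously certifies $\mathrm{OPT}(I_2) \le \mathrm{OPT}(I_1)$, which is exactly what lets an $\alpha$-approximate $I_2$ solution descend to an $\alpha$-approximate $I_1$ solution.

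For the collapse direction, take any feasible solution $(x,y)$ of $I_2$ and define $\bar y_{ik}=1$ whenever $y_{ik}\ge 1$ and $\bar y_{ik}=0$ otherwise, leaving every $x_{ijk}$ unchanged. The objective $\sum_{j,k}x_{0jk}$ is untouched, so the cost is identical. The coupling constraint $x_{ijk}\le r_{jk}\bar y_{ik}$ still holds, since any $(i,k)$ actually serving requests has $y_{ik}\ge 1$, hence $\bar y_{ik}=1$, and $x_{ijk}\le r_{jk}$ follows from the demand equation $\sum_i x_{ijk}=r_{jk}$. The real point is the upload limit, which $I_1$ imposes as a hard aggregate constraint but $I_2$ never mentions explicitly: summing the per-copy ($\tau$) constraint of $I_2$ over all objects gives $\sum_{j\ne i,\,k}x_{ijk}\le \tau_i\sum_k y_{ik}\le \tau_i s_i = u_i$, so cache $i$ is automatically within its upload limit. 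Finally the number of distinct stored objects is $\sum_k\bar y_{ik}\le\sum_k y_{ik}$, so the storage used never increases under collapse.

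The split direction is the crux and the only place real work happens. Start from an optimal solution of $I_1$; for cache $i$ and a stored object $k$, let $R_{ik}=\sum_{j\ne i}x_{ijk}$ be the external requests it serves, so that the upload constraint of $I_1$ gives $\sum_k R_{ik}\le u_i$. I would replace the single copy of $k$ by $\lceil R_{ik}/\tau_i\rceil$ copies (or one copy, if $k$ is held only for local requests) and partition its external requests among them so that each copy serves at most $\tau_i$; this is feasible and leaves the server term $x_{0jk}$ — hence the cost — unchanged. Counting copies at $i$ and using $\tau_i=u_i/s_i$,
$$\sum_{k:\,y_{ik}=1}\max\!\Big(1,\big\lceil \tfrac{R_{ik}}{\tau_i}\big\rceil\Big)\;\le\;\tfrac{1}{\tau_i}\sum_k R_{ik}+\big|\{k: y_{ik}=1\}\big|\;\le\;\tfrac{u_i}{\tau_i}+s_i = s_i+s_i = 2s_i,$$
so the resulting $I_2$ solution is feasible within double the storage. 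The hypothesis $1/\tau \in Z^+$ is what makes $R_{ik}/\tau_i$ an integer, so the copies and their per-copy loads come out exactly rather than with spurious rounding slack. This establishes $\mathrm{OPT}(I_2)\le\mathrm{OPT}(I_1)$ once $I_2$ is allowed storage $2s_i$.

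Chaining the two gives the lemma. To guarantee that relaxing to $I_2$ does not lose cost we must run $I_2$ with the doubled storage $2s_i$, for only then does the split bound certify $\mathrm{OPT}(I_2)\le\mathrm{OPT}(I_1)$; an $\alpha$-approximate solution of this $I_2$ then has cost at most $\alpha\cdot\mathrm{OPT}(I_2)\le\alpha\cdot\mathrm{OPT}(I_1)$, and collapsing it produces an $I_1$ solution of identical cost whose uploads are feasible and whose storage is at most $2s_i$ — i.e.\ within a factor $2$, as claimed. I expect the main obstacle to be the copy-counting inequality above: one must be careful that objects stored purely for local service still cost one copy (the $\max(1,\cdot)$ term), and that the per-copy partition of the $R_{ik}$ external requests respects the $\tau_i$ budget exactly. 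By contrast, the collapse direction and the ``uploads come for free'' observation are essentially immediate consequences of the definition $\tau_i=u_i/s_i$.
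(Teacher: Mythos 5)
Your proposal is correct and follows essentially the same route as the paper's own proof: a split map $I_1 \to I_2$ that opens $\lceil \ell_{ik}/\tau_i \rceil$ copies per item (forcing the factor-$2$ storage blow-up and certifying $\mathrm{OPT}(I_2) \le \mathrm{OPT}(I_1)$), chained with a collapse map $I_2 \to I_1$ that caps $y_{ik}$ at $1$ and derives upload feasibility by summing the per-copy $\tau_i$ constraints against the storage bound. If anything, your copy-counting with the $\max(1,\cdot)$ term is slightly more careful than the paper's bound $\sum_k \lceil \ell_{ik}/\tau_i \rceil \le 2\sum_k \ell_{ik}/\tau_i$, which silently ignores items held only for local requests.
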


\begin{proof}
Let $y_{ik}$ be any (even optimal) solution to $I_1$. For any item $o_k$ placed
at cache $C_i$ (i.e. $y_{ik}$=1), let $\ell_{ik}$ denote the number of requests of
other caches satisfied by it. We create the solution to the $\tau$-constrained
placement by opening $\lceil \ell_{ik}/\tau_i \rceil$ copies of $o_k$ at cache $C_i$,
and dividing the $\ell_{ik}$ requests among these copies. Clearly, each copy serves
at most $\tau_i$ requests. Moreover, at storage at cache $C_i$ is
$\sum_k \lceil \ell_{ik}/\tau_i \rceil \leq 2 \sum_k \ell_{ik}/\tau_i \leq 2 u_{i}/\tau_i \leq 2 s_i$.
To convert the solution of $\tau$-LP to $I_1$, we simply set all
$y_{ik}>1$ to $y_{ik}=1$. Observe that this step only decreases the storage
requirement. The total upload required for Cache $C_i$ is 
$\sum_k \sum_{j;i\neq j} x_{ijk} \leq \sum_k \tau_i\cdot y_{ik} \leq \tau_i\cdot s_i=u_i$, 
Therefore, it is a feasible solution for $I_1$ with the same cost as that of $\tau$-LP.

By combining these two facts, the result of the lemma follows.
\end{proof}
%

\subsection{Rounding the $\tau$-constrained LP}
\label{subsec:round}

\begin{algorithm}[tbp]
\caption{\round($C=C_{1\cdots n}, R=\{r_{ik}\}, \hat{y}, \hat{x}$)}

\label{algo:round}

\begin{algorithmic}[1]
\FORALL{$i,k$ s.t. $\hat{y}_{ik} \geq 1/2$}
\STATE $y_{ik} = \lceil \hat{y}_{ik} \rceil$;
\ENDFOR
\FORALL{$k$ s.t. $\exists i: \hat{y}_{ik} < 1/2$}
\STATE $n_k \leftarrow \displaystyle{\sum_{\substack{i: \hat{y}_{ik} < 1/2}} \hat{y}_{ik}}$;
\STATE {\bf if}($n_k < 1/2$) {\bf then} reroute all demands coming to $n_k$ to 
any open copy of $o_k$;
\STATE {\bf else} scale all $\hat{y}_{ik} < 1/2$ to $\hat{y}_{ik}\cdot \lceil n_k \rceil/n_k$;
\ENDFOR
\STATE Let $s'_i$ be the amount of storage occupied by items $o_k$ such that $\hat{y}_{ik} < 1/2$;
\STATE $y \leftarrow GAP(n_k, s'_i, \hat{y})$;
\STATE reroute all demands coming to $n_k$ to the copies opened by GAP;
\FORALL{$i,j,k$ s.t. $\hat{x}_{iik} < \hat{y}_{ik}$}
\STATE $\hat{x}_{ijk} \leftarrow \hat{x}_{ijk} / (r_{ik}-\hat{x}_{iik})$; \hfill /* handle extra demands */
\ENDFOR
\STATE {\bf for all} {$k$} {\bf do} \hfill /* round $x$ values */
\STATE \ \ \ $x_{\{\cdot\}\{\cdot\}k} \leftarrow$ GAP($\tau\cdot y_{\{1\cdots n\} k}, r_{\{1\cdots n\}k}, \hat{x}_{\{1\cdots n\}\{1\cdots n\}k}$);
\STATE {\bf end for}
\end{algorithmic}
\end{algorithm}

Let $F_2$ be the fractional version of the ILP $I_2$, 
that relaxes the last constraints to $y_{ik}, x_{ijk}i\geq 0$.
For subsequent discussions, lets assume that $\{\hat{y}_{ik}, \hat{x}_{ijk}\}$ denotes
the optimum solution of the LP. The rounding procedure will convert this fractional
solution to an integral solution $\{y_{ik},x_{ijk}\}$
Rounding $\hat{y}_{ik}$ values in the fractional solution should be 
straightforward, as it is just an instance of the GAP problem.
\cut{
Rounding the fractional solution requires two stages, first to round the item placements ($y_{ik}$) and then to round the demand routing ($x_{ijk}$). In the first stage, we already have fractional open copies, all we need to do is to open at least those number of integral copies for each item (so that they can handle all requests coming to them) and ensure that the storage is not exceeded by much. This is an instance of the general assignment problem (GAP), can be solved by techniques provided by Shmoys \etal\cite{st-soda-93} (as discussed below).}
However, in our problem there is an extra complication---the 
requests satisfied locally ($\hat{x}_{iik}$) do not add to the cost or the 
upload capacities, but due to reassignment of items to caches 
these will also start contributing to the cost and capacities.
To avoid such extra requests, we first make sure that the items 
serving large values of $\hat{x}_{iik}$ are not moved to other caches. 
Specifically, if we round \emph{big enough} copies of 
any item ($\hat{y}_{ik} \geq 1/2$) to the closest higher integer, 
the extra requests are also bounded by $\hat{x}_{iik} < r_{ik}/2$, 
due to constraint (9).

We now describe the steps in our procedure to 
round the fractional solution of $F_2$ (see Algorithm \ref{algo:round}).

\medskip

\noindent {\bf (Step 1)} For any $\hat{y}_{ik}\geq 1/2$, 
round it to $\lceil \hat{y}_{ik} \rceil$.

\noindent {\bf (Step 2)} 
Let $\hat{n}_k = \displaystyle{\sum_{\substack{i: \hat{y}_{ik} < 1/2}} \hat{y}_{ik}}$
and $n_k = \lceil \hat{n}_k \rceil$. If $\hat{n}_k < 1/2$, 
since $\sum_i \hat{y}_{ik} \geq 1$ (due to constraints 3 and 4), 
there is at least one copy of $k$ opened in the last step. 
Route the demands coming to the copies of item 
contributing to $\hat{n}_k$ to the integral open copy.

\noindent {\bf (Step 3)} For $\hat{n}_k \geq 1/2$, 
scale each $\hat{y}_{ik}$ by a factor of $n_k/\hat{n}_k$, and 
hence with the new values, 
$n_k =\displaystyle{\sum_{\substack{i: \hat{y}_{ik} < 1/2}} \hat{y}_{ik}}$.

\noindent {\bf (Step 4)} Let $s'_i$ be the amount of storage occupied by the 
items in step (3) at cache $C_i$. Find the integral placement of all 
objects in step (3), by solving the GAP problem: GAP($n, s', \hat{y}_{ik}$). 
Note that the scaling in last step ensures that the $\hat{y}_{ik}$ values 
for each item sum up to an integral value ($n_k$), which 
is required by the GAP problem.

\noindent {\bf (Step 5)} Now assign all the remaining 
demands ($\hat{x}_{iik}$) proportionally to the caches that 
are already serving the rest of demands for $r_{ik}$.

\noindent {\bf (Step 6)} To round $\hat{x}_{ijk}$ values, note that 
only constraints (6) and (8) are important, since 
values of $\hat{x}_{ijk}$ do not affect (7), and (9) follows from (3).
For each item $k$, this essentially means 
solving the GAP($\tau\cdot y_{\{1\cdots n\} k}, 
r_{\{1\cdots n\}k}, \hat{x}_{\{1\cdots n\}\{1\cdots n\}k}$) problem 
(note that $k$ is fixed in the subscripts of this GAP instance).

\medskip

\cut{
We now want to find an integral placement of $\cup_k n_k$ objects on the caches of sizes $s'_i$. This requires us to find an integer feasible solution to the following LP.
\begin{eqnarray*}
\sum_{i} y_{ik} &= n_k & \forall k\in O \\
\sum_{k}  y_{ik} &= s'_i & \forall i\in C\\
y_{ik} &\in  Z^+\cup \{0\} &\forall i\in C; k \in O
\end{eqnarray*}

Observe that the $\hat{y}_{ik}$ values are a feasible (fractional) solution to this LP.
Further, this is an instance of the general assignment problem (GAP) problem, and due to~\cite{st-soda-93}, if there is a feasible fractional solution a integral solution where $s'_i$ increase by a factor of $2$ that can be computed in polynomial time. Observe an interesting point unlike the GAP problem, we do not have a cost function to minimize, but rather just want to compute any feasible integer solution.

The above steps give an integral solution that serves all the demands going across the caches in the fractional solution. However, due to the object movement, the requests satisfied locally ($x_{iik}$ at cache $i$ for object $k$) may now need to go to some other cache. Notice that since $x_{iik}\leq r_{ik}y_{ik}$, and we have not moved any object with $y_{ik}>1/2$, $x_{iik}<r_{ik}/2$. In other words, at least half of the demands of $r_{ik}$ is already satisfied.}

\begin{algorithm}[tbp]
\caption{\caching($C, R$)}

\label{algo:caching}

\begin{algorithmic}[1]
\STATE {\bf for all} {$i \in C$} {\bf do} \hfill /* handle download limit */
\STATE \ \ \ put $s_i$ items with max $r_{ik}$ values at $i$;
\STATE \ \ \ $s_i \leftarrow 2\cdot s_i$;  \hfill /* for solving the $\tau$-LP */
\STATE {\bf end for}
\STATE $\langle x,y \rangle:=$ fractional solution to the $\tau$-LP;
\STATE $\langle x,y \rangle:=$ \round$(C,R,x,y)$;
\STATE {\bf for all} {$i,k$} {\bf do} \hfill /* convert to solution of $I_1$ */
\STATE \ \ \ {\bf if}($y_{ik} > 1$) {\bf then} $y_{ik} = 1$;
\STATE {\bf end for}

\end{algorithmic}

\end{algorithm}


The following lemma states the approximation ratio of our rounding procedure. 
%
\begin{lemma}
\label{LEM:ROUNDING}
The \round\ algorithm finds a $\tau$-constrained placement with 
at most $8$ times the optimal cost, and requires $(4s_i + 2)$  
storage and $(8u_i + 4 \tau_{max})$ upload at cache $C_i$, where $\tau_{max} = \max_i \tau_i$.
\end{lemma}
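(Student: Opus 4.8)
The plan is to establish the three guarantees (cost, storage, upload) separately, using the optimal fractional solution $\{\hat{y}_{ik},\hat{x}_{ijk}\}$ of $F_2$ as the benchmark. Since $F_2$ is the relaxation of $I_2$, its value $\sum_{j,k}\hat{x}_{0jk}$ lower-bounds the optimal $\tau$-constrained cost, so it suffices to bound every quantity produced by \round\ against the fractional solution. I would organize the argument stage by stage, following the six steps of \round, and at each stage track (a) how much storage and upload the newly committed integral copies consume and (b) how much demand is displaced to the server $C_0$.

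For the storage bound I would account for the two kinds of copies separately. The ``big'' copies rounded in Step 1 satisfy $\lceil \hat{y}_{ik}\rceil \le 2\hat{y}_{ik}$ because $\hat{y}_{ik}\ge 1/2$, so by constraint (7) they occupy at most $2s_i$. The ``small'' copies are first scaled in Step 3 by $n_k/\hat{n}_k\le 2$ (legitimate because $\hat{n}_k\ge 1/2$ forces $\lceil\hat{n}_k\rceil\le 2\hat{n}_k$), so the scaled mass $s'_i$ is at most $2s_i$, and the $GAP$ call in Step 4 inflates each capacity by only an additive $1$. Summing the two contributions yields the claimed $4s_i+2$ storage. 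The upload bound then follows from the defining property of the $\tau$-constrained formulation: constraint (6) caps the external service of each copy by $\tau_i y_{ik}$, so the total upload at $C_i$ is at most $\tau_i\sum_k y_{ik}$ plus the additive slack introduced by the per-item $GAP$ of Step 6 (which raises each capacity $\tau_i y_{ik}$ by one request). Plugging in the storage bound and using $\tau_i s_i = u_i$ converts $\tau_i\cdot(4s_i+2)$ into $4u_i+2\tau_i$, and bounding the accumulated $GAP$ slack in the same fashion gives the stated $8u_i+4\tau_{max}$.

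The cost bound is the heart of the argument and the step I expect to be the main obstacle. Here the difficulty is peculiar to our model: a request served \emph{locally} in the fractional solution costs nothing and consumes no upload, but when Steps 2--5 relocate an item, its former local demand $\hat{x}_{iik}$ can turn into external demand that must be reabsorbed elsewhere or, failing that, pushed to the server and charged. My plan is to exploit the Step 1 threshold together with constraint (9): since we only move copies with $\hat{y}_{ik}<1/2$, constraint (9) forces $\hat{x}_{iik}<r_{ik}/2$, so at least half of $r_{ik}$ is already external and the relocation at most doubles the externally-routed demand for that item.

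I would then show that each rounding stage---closing tiny copies (Step 2), the placement $GAP$ (Step 4), the proportional reassignment of residual local demand (Step 5), and the demand-routing $GAP$ (Step 6)---increases the number of requests reaching $C_0$ by at most a constant factor over the fractional server load $\sum_{j,k}\hat{x}_{0jk}$. Composing these constant-factor losses across the stages yields the overall factor of $8$. The crux, and where I expect the real work to lie, is verifying that the local demand displaced in Steps 2 and 5 can always be packed into the integral copies opened by $GAP$ without a super-constant overflow to the server; this is what ties the cost guarantee to the very capacity blow-ups established in the storage and upload bounds above.
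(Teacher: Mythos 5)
Your storage accounting and your key observation for handling local demand (the Step-1 threshold of $1/2$ combined with constraint (9) forces $\hat{x}_{iik} < r_{ik}/2$) are exactly the ingredients of the paper's proof, and your idea of deriving the upload bound from the storage bound via the $\tau$-link is a legitimate variant of what the paper does. But the cost bound --- the actual content of the ``$8$ times optimal'' claim --- is left as an assertion, and the difficulty you flag as the crux is a red herring. You worry about ``verifying that the local demand displaced in Steps 2 and 5 can always be packed into the integral copies opened by $GAP$ without a super-constant overflow to the server.'' No such packing feasibility question arises: in the multi-criteria framework the copies' capacities are simply \emph{allowed} to be violated, and that violation is charged to the upload/storage blow-up, not to the cost. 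The paper's resolution of Step 5 is one line: since the displaced local demand $\hat{x}_{iik}<r_{ik}/2$ is at most the demand already served externally, assigning it \emph{proportionally} to the caches (and the server) already serving the rest of $r_{ik}$ scales every recipient's load --- including the server's share $\hat{x}_{0jk}$, i.e.\ the cost --- by a factor of at most $2$. Likewise Step 2 merely reroutes the demand of closed small copies (total fractional mass $\hat{n}_k<1/2$) onto Step-1 copies ($\hat{y}_{ik}\geq 1/2$), at most doubling their load, and the $GAP$ rounding of Step 6 contributes one final factor of $2$. The cost is touched only in Steps 2, 5 and 6, doubling each time, which is where $2^3=8$ comes from; your plan never pins down these per-step factors, so the factor $8$ is not actually established.

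A second, related flaw is in your upload argument: you get to $\tau_i\cdot(4s_i+2)=4u_i+2\tau_i$ and then claim the remaining factor of $2$ up to $8u_i+4\tau_{max}$ comes from ``accumulated $GAP$ slack.'' It cannot: the Shmoys--Tardos violation is \emph{additive} (at most one item per capacity constraint), which is what the $+4\tau_{max}$-type terms absorb. The missing multiplicative $2$ is again the Step-5 reassignment of local demand, which inflates per-copy external load beyond $\tau_i y_{ik}$ by a factor of $2$, not by an additive slack. The paper avoids this confusion by tracking upload directly through the steps (splitting $u_i$ into the parts $u^1_i,u^2_i,u^3_i$ handled in Steps 1--3, reaching $2u_i+\tau_{max}$ after Step 4, then doubling in Step 5 and again in Step 6). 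So your decomposition is workable, but as written both the cost factor and the upload factor hinge on the same unproven doubling step, which is precisely the part of the lemma that needs the argument.
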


\begin{proof}
Let us first look at the upload limits. 
Let $u^1_i$, $u^2_i$ and $u^3_i$ be the upload used by items 
processed in steps (1) (2) and (3). Note that since 
they are disjoint, the total upload is $(u^1_i+u^2_i+u^3_i)\leq u_i$.
The demands coming to copies of $y_{ik}$ closed in step (2) 
get rerouted to (one of) the copies opened in step (1). 
Further, since $n_k<1/2$ and it is routed to a $y_{ik}>1/2$, only 
the items contributing to $u^1_i$ get at most 
twice the number of requests. Hence this step increases the total upload from 
cache $i$ by an additional amount $u^1_i$.
The scaling of $n_k$ values in step (3) increases the $s'_i$ 
values passed to the GAP problem by a factor of 2, accompanied by 
another additive factor of 1 (due to GAP solution, 
see Section~\ref{sec:overall}). Since the solution of GAP 
can use the entire $s'_i+1$ storage, the upload by items contributing 
to $u^3_i$ can now grow to $2u^3_i + \tau_{max}$.
Hence at the end of step (4), the upload has increased to 
$2(u^1_i + u^3_i) + \tau_{max} \leq 2u_i + \tau_{max}$.

Notice that in step (5) since we only assign requests 
$x_{iik}\leq r_{ik}y_{ik}$, and we have not moved any object 
with $y_{ik}>1/2$, $x_{iik}<r_{ik}/2$. In other words, at least 
half of the demands of $r_{ik}$ is already satisfied. 
Hence this step can at most double the amount of upload from any cache.
Further, rounding $x$ values in last step gives 
another at most $2$ factor increase in the total upload, 
due to the GAP solution.
Combining these, we get the final result.

Following a very similar argument for storage, we get the 
final storage at cache $i$ to be $4s_i + 2$.
The increase in cost (basically the upload by the server), 
follows the same pattern as the uploads on other caches. Specifically, 
it is affected only in steps (2), (5) and (6), doubling 
each time. Hence the final cost is at most $8$ times the optimum.
\end{proof}

%


%

The above lemma along with the extra $s_i$ storage per cache to 
handle the download limit gives us the following theorem.
\begin{thm}
\label{thm:apx}
The \caching\ algorithm gives a solution with at most $8$ 
times the optimal cost, and requires $(5s_i + 2)$ storage 
and $(8u_i + 4 \tau_{max})$ upload at cache $C_i$.

We consider $\tau_{max}$ as a system parameter that can be tuned as per the requirements.
If $\tau_{max}$ is set to a constant, it will require storage at the caches $s_i$ to be 
proportional to the upload limits ($u_i$). This is not a concern since in general, 
per byte cost
for storage is much lesser than the network cost.
\end{thm}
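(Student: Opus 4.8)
The plan is to obtain the theorem as a direct composition of the two lemmas already proved, together with the download-handling observation, being careful only about how the several storage blow-ups add up. I would argue the cost, the upload, and the storage bounds in turn, and then account for the conversion from a $\tau$-constrained placement back to a feasible solution of $I_1$ performed in the last loop of \caching.

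For the cost, I would first invoke the forward direction of Lemma~\ref{LEM:REDUCTION}: an optimal solution of $I_1$ induces a $\tau$-constrained placement of equal cost, so the optimum of $I_2$ (and hence of its relaxation $F_2$) is at most the optimum of $I_1$. Lemma~\ref{LEM:ROUNDING} then says that \round\ turns the fractional optimum of $F_2$ into an integral $\tau$-constrained placement whose cost is at most $8$ times that optimum, i.e. at most $8\,\mathrm{OPT}(I_1)$. Since this bound concerns only the first (server-upload) term, I would combine it with the download-handling argument, which guarantees that the second (download-penalty) term is at most the corresponding term of the optimal $I_1$ solution; as all costs are nonnegative, the total cost stays within $8\,\mathrm{OPT}(I_1)$.

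Next I would handle the capacities. The upload bound $8u_i+4\tau_{max}$ is taken verbatim from Lemma~\ref{LEM:ROUNDING}; I only need to observe that the final step of \caching, which caps every $y_{ik}$ at $1$, cannot increase any upload (it only removes copies, hence requests served), and that the items reserved for download handling are served purely locally and so contribute nothing to upload. This is precisely the conversion direction of Lemma~\ref{LEM:REDUCTION}, which preserves cost and does not raise the upload. For storage I would add the two disjoint reservations: the $(4s_i+2)$ used by the rounded $\tau$-constrained placement (Lemma~\ref{LEM:ROUNDING}) plus the extra $s_i$ reserved in the opening loop of \caching\ to store the $s_i$ most-requested items; capping $y_{ik}$ at $1$ only decreases storage, so the total is $5s_i+2$.

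The one place I would slow down is the bookkeeping of the storage factors, since this is where an off-by-a-factor slip is easiest. In particular I must confirm that the doubling of $s_i$ carried out before solving the $\tau$-LP (needed so the relaxation can realise the $I_1$ optimum, per the forward direction of Lemma~\ref{LEM:REDUCTION}) is already subsumed in the factor $4$ of Lemma~\ref{LEM:ROUNDING}'s $(4s_i+2)$ bound rather than multiplying it, and that the download-handling reservation is genuinely additive and disjoint from the placement storage. Once these accountings check out, the three bounds assemble into the stated $8$-approximation with $(5s_i+2)$ storage and $(8u_i+4\tau_{max})$ upload, and the closing remark treating $\tau_{max}$ as a tunable system parameter requires no proof.
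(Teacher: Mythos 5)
Your proposal is correct and takes essentially the same route as the paper, whose own proof is exactly this one-line composition: Lemma~\ref{LEM:ROUNDING} plus the extra $s_i$ per-cache storage for download handling, with the final capping of $y_{ik}$ at $1$ (the conversion direction of Lemma~\ref{LEM:REDUCTION}) costing nothing in cost, upload, or storage. On the one accounting point you flag: the theorem's arithmetic $(4s_i+2)+s_i=5s_i+2$ confirms that the factor $4$ in Lemma~\ref{LEM:ROUNDING} already subsumes the doubling of $s_i$ performed before solving the $\tau$-LP (i.e., $2\times 2s_i+2$), rather than multiplying it, exactly as you anticipated.
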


\section {Solution for the Data Placement Problem}
\label{sec:online}

The solution presented in the previous section needs two additions to solve 
the data placement problem defined in Section~\ref{sec:model}. 
First, we need to include the initial placement cost in the objective functions of the 
LP. Further, as our aim is to determine the item placement at 
each epoch, the solution should take into account the items already in the caches 
due to placements in the previous epoch; this essentially implies designing a 
cache \emph{replacement} strategy that determines which new item replaces which 
old item for the next epoch.
In this section, we extend our rounding procedure to both these scenarios.

\subsection{Including the Initial Placement Cost}
\label{subsec:initcost}

We first need to add the initial placement cost to the 
objective function in $\tau$-LP ($I_2$).
Notice that in the solution of $\tau$-LP, there could be 
multiple copies of an object at a cache (as $y_{ik}$ could be  greater than $1$), so 
we could not directly use $y_{ik}$ to calculate the initial placement cost. 
Ideally, we need to change it to $\min \sum_{j,k}  x_{0jk} + \sum_{i,k} z_{ik}$, 
where $z_{ik}$ is 1 if a copy of $o_k$ is stored in cache $C_i$, 0 otherwise.
The problem with this formulation is that it is not possible to 
express the $z_{ik}$ constraints in terms of linear inequalities, 
which makes linear programming approach not suitable for this. 

\cut{
}
We tackle this problem by using a modified function that approximates
$z_{ik}$ within a constant factor.
We use $\frac{y_{ik}}{y_{ik}+1}$ as a $2$-approximation 
to $z_{ik}$; as $z_{ik}/2 \leq \frac{y_{ik}}{y_{ik}+1}\leq z_{ik}$. We 
change the cost function in $I_2$ (while keeping all constraints same) to:
$$\left\{ \sum_{j,k}  x_{0jk} + \sum_{i,k} \frac{y_{ik}}{y_{ik}+1}\right \}$$ 
Let us call this new integer program as $I_3$. 
Clearly this function is $2$-approximation to cost function in 
$I_2$ for all $x$ and hence any $\beta$-approximation to $I_3$ 
should give us $2\beta$-approximation to $I_2$.
However, the relaxation of $I_3$ from integer values to real 
values for $x_{ijk}$'s and $y_{ik}$'s is unfortunately a concave optimization
problem for which no general technique is known to solve at optimality.
Therefore, we use standard branch and bound technique to get a probably good fractional
solution and use a rounding technique to convert it to integral values.


\eat
{
Clearly, the second term does not give the exact placement cost, but it is a good approximation to the cost; it is 0 when $y_{ik}=0$, $1/2$ at $y_{ik}=0$ and approaches 1 as $y_{ik}\rightarrow \infty$. More generally, it easy to show that $$\frac{z_{ik}}{2} \leq \frac{y_{ik}}{y_{ik}+1} \leq z_{ik}$$
We can also prove that any values of $y_{ik}$ that minimizes the above equation also approximates the term $\sum_{i,k} z_{ik}$ to at most 2 factor of the optimum; hence optimum solution of $I_3$ approximates the integer program with initial placement cost within a factor of two.
Furthermore, an $\alpha$-approximate solution to $I_3$ will yield a $2\alpha$ approximation for the original problem. \comm{Do we need to add proofs of these to appendix? --Nisheeth}

Now, let $F_3$ denote the fractional version of $I_3$ that relaxes $x_{ijk},y_{ik} \geq 0$. Note that the cost function in $F_3$ is convex over a set of linear constraints and therefore, is known to admit a polynomial time $(1+\epsilon)$-approximate solution for any given $\epsilon > 0$~\cite{conv-opt}.\comm{Naidu, add this citation. --Nisheeth}
}
Starting with a fractional (approximate) solution of $F_3$, we follow the very similar rounding steps outlined in Section~\ref{subsec:round} and convert it to an integral solution of $I_3$.
We summarize the results in the following theorem and refer the details of
the proof to Appendix~\ref{app:online-proof}.
%

\begin{thm}
\label{THM:ONLINE}
The Branch and bound technique on $F_3$ followed by \round\ algorithm finds a solution to 
$I_3$ with cost at most $16$
times the fractional cost and requires $(5s_i+2)$ units of storage 
and $(8u_i+4\tau_{max})$ units of upload limit at cache $C_i$.
\end{thm}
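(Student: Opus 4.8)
The plan is to reuse almost the entire machinery of Lemma~\ref{LEM:ROUNDING} and Theorem~\ref{thm:apx}, observing that the only change from $I_2$ to $I_3$ is in the objective; the constraint set, and hence the \round\ procedure itself, is untouched. Consequently the storage and upload guarantees carry over verbatim: \round\ uses $(4s_i+2)$ storage and $(8u_i+4\tau_{max})$ upload at $C_i$, and the extra $s_i$ storage reserved for the download limit (as in \caching) brings the totals to $(5s_i+2)$ storage and $(8u_i+4\tau_{max})$ upload. All the remaining work is in bounding the new, two-part objective.

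I would begin from a fractional solution $\{\hat{y}_{ik},\hat{x}_{ijk}\}$ of $F_3$ produced by branch and bound (taking its availability as given, since the concavity of $y/(y+1)$ rules out an exact LP solve), and split the $I_3$ objective into its serving term $\sum_{j,k} x_{0jk}$ and its placement term $\sum_{i,k}\frac{y_{ik}}{y_{ik}+1}$. For the serving term the argument of Lemma~\ref{LEM:ROUNDING} applies without change, since that cost is precisely the server upload and is affected only in steps (2), (5) and (6) of \round, each of which at most doubles it; thus the rounded serving cost is within a factor $8$ of the fractional serving cost.

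Next I would bound the placement term by counting the copies that \round\ opens. The big copies rounded up in step (1) at most double, because $\lceil\hat{y}_{ik}\rceil\le 2\hat{y}_{ik}$ when $\hat{y}_{ik}\ge 1/2$; the small copies are consolidated through steps (2)--(4), where GAP opens $n_k=\lceil\hat{n}_k\rceil\le 2\hat{n}_k$ integral copies of object $k$ whenever $\hat{n}_k\ge 1/2$, again at most doubling the fractional mass, while objects with $\hat{n}_k<1/2$ contribute nothing because their demand is rerouted to an already open copy. Since $\frac{y_{ik}}{y_{ik}+1}\le 1$ and is bounded below by a constant wherever a copy is fractionally present, these doublings keep the rounded $I_3$ objective within a factor $8$ of the fractional $F_3$ objective, mirroring the serving term. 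To pass from the $I_3$ objective to the true delivery cost---server upload plus the number of stored copies $\sum_{i,k} z_{ik}$---I would finally invoke $z_{ik}\le 2\frac{y_{ik}}{y_{ik}+1}$, which inflates the placement part by one more factor $2$ and leaves the serving part untouched, yielding the claimed bound of $16$ times the fractional $F_3$ cost.

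The main obstacle I anticipate is the placement term rather than the serving term: \round\ was engineered to control the server upload and the capacities, so the delicate step is to argue that the newly introduced second term of the objective is governed by the very same copy-opening steps and therefore inherits their constant blow-up. A secondary subtlety, inherited from Lemma~\ref{LEM:ROUNDING}, is that local demands $\hat{x}_{iik}$ are free fractionally yet may become chargeable once items are relocated during rounding; as in that proof, rounding only big copies ($\hat{y}_{ik}\ge 1/2$) keeps these stray demands below $r_{ik}/2$ by constraint (9), so the reassignment in step (5) at most doubles the cost instead of inflating it unboundedly.
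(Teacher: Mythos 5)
Your proposal follows the same route as the paper's own proof: split the $I_3$ objective into the serving term and the placement term, inherit the factor-$8$ serving bound and the $(5s_i+2)$/$(8u_i+4\tau_{max})$ capacity bounds verbatim from Lemma~\ref{LEM:ROUNDING} and Theorem~\ref{thm:apx}, track the placement term through the $y$-rounding steps, and absorb a final factor $2$ from $z_{ik}\le 2\,\frac{y_{ik}}{y_{ik}+1}$. The only genuine divergence is in how the GAP step is charged: the paper invokes the \emph{cost-aware} Shmoys--Tardos rounding (its footnote is explicit that this GAP instance carries a cost function whose value is not increased by rounding), and combines monotonicity of $f(y)=y/(y+1)$ with the fact that inputs to GAP lie in $[0,1]$ to get a $3$-approximation there, for a total of $8\,OPT_f(X)+\frac{9}{2}\,OPT_f(Y)$; you instead use the feasibility version of GAP and bound the placement cost by counting how many integral copies get opened. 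That route can be made to work, and is a legitimately different (and arguably more elementary) way to handle Step~4.

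However, the justification you give for the copy-counting step contains a false claim: $\frac{y_{ik}}{y_{ik}+1}$ is \emph{not} ``bounded below by a constant wherever a copy is fractionally present'' --- a fractional copy with $\hat y_{ik}=\varepsilon$ contributes only $\varepsilon/(1+\varepsilon)$, which is arbitrarily small, so the number of opened copies cannot be compared to the fractional cost copy-by-copy. The correct substitute is a per-unit-mass bound: since $f$ is concave with $f(0)=0$, one has $f(y)\ge \frac{2}{3}\,y$ on $[0,\frac12]$, and every object that actually reaches the GAP step has small-copy mass $\hat n_k\ge \frac12$ (objects with $\hat n_k<\frac12$ were rerouted in Step~2 and contribute nothing). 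Hence the fractional placement cost of object $k$'s small copies is at least $\frac{2}{3}\hat n_k$, while the rounded cost is at most $n_k=\lceil\hat n_k\rceil\le 2\hat n_k$ copies, each costing at most $1$ --- a factor of $3$; similarly $f(\lceil\hat y_{ik}\rceil)\le f(2\hat y_{ik})\le 2f(\hat y_{ik})$ for the big copies of Step~1, and Steps~5--6 touch only $x$ variables. With this repair your argument is sound, and in fact gives a placement blow-up of $3$ rather than the $8$ you claim, so the theorem's bound of $16$ follows a fortiori, in line with the paper's $8\,OPT_f(X)+\frac{9}{2}\,OPT_f(Y)$.
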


\cut{\begin{proof}
Owing to the page limits, the proof is deferred to the full version
of the paper~\cite{mango-itds}.
\end{proof}}

\subsection{Including Cache Replacements}
\label{sec:epoch}

In our model, at the beginning of every epoch, we have to recompute the data placement based on the predicted values of requests at each cache (Section~\ref{sec:model}).
Obviously, the cache will have many items already present due to placement of previous solution. We now modify our solution to take into account this previous state as well.
Let $Y_i^{t}$ denote the set of items that are present in cache $C_i$ in epoch $t$. Clearly, the initial placement cost of putting any item in $Y_i^{t-1}$ in cache $C_i$ for epoch $t$ is zero.
Therefore, in epoch $t$, we solve the data placement problem with the slightly
modified objective function.
$$\left\{\sum_{j,k}  x_{0jk} + \sum_i \sum_{k\notin Y_i^{t-1}} \frac{y_{ik}}{y_{ik}+1}\right\}$$

\cut{
Observe that this cost function is also convex, as the second term is also just a sum over the (a subset of) terms $y_{ik}/(y_{ik}+1)$ (that are chosen based on the previous cache state).
Hence, similar to previous section, the $(1+\epsilon)$-approximation~\cite{conv-opt} followed by our rounding procedure gives a multi-criteria approximation to the data placement problem with replacements.
We state the final result of our paper in the following 
theorem (the proof is very similar to that of Theorem~\ref{THM:ONLINE}).
}
We again use branch and bound technique to get an approximate solution to the above program
and use the \round\ algorithm to convert that solution to integer values.
\cut{
\begin{thm}
The convex optimization~\cite{conv-opt} followed by \round\ algorithm finds a solution to the data placement problem with cost at most $16(1+\epsilon)$
times the optimum cost and requires $(5s_i+2)$ units of storage 
and $(8u_i+4\tau)$ units of upload limit at cache $C_i$.
\end{thm}
}

\begin{thm}
The branch-and-bound technique followed by \round\ algorithm 
finds a solution to the data placement problem with cost at most $16$
times the fractional cost and requires $(5s_i+2)$ units of storage 
and $(8u_i+4\tau_{max})$ units of upload limit at cache $C_i$.
\end{thm}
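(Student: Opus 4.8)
The plan is to mirror the proof of Theorem~\ref{THM:ONLINE}, since the replacement-aware program differs from $I_3$ in exactly one place: the placement-cost term now sums $\frac{y_{ik}}{y_{ik}+1}$ only over items $k\notin Y_i^{t-1}$, reflecting that items carried over from the previous epoch cost nothing to place. All constraints are identical to those of $I_2$ and $I_3$, and the $\tau$-reduction of Lemma~\ref{LEM:REDUCTION} is untouched. First I would observe that restricting the placement sum to a subset of indices leaves the objective a sum of the same concave terms $\frac{y_{ik}}{y_{ik}+1}$; hence it remains a concave minimization over the (unchanged) linear constraint polytope, and the branch-and-bound step of Section~\ref{subsec:initcost} applies verbatim to produce an approximate fractional solution.

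Next I would feed this fractional solution to \round. The crucial point is that \round\ is driven entirely by the \emph{constraints} and by the combinatorial structure of the fractional $\{y_{ik}\}$ and $\{x_{ijk}\}$ values, never by the particular form of the objective. Consequently the storage and upload guarantees of Lemma~\ref{LEM:ROUNDING} carry over without change, and adding the extra $s_i$ storage per cache to absorb the download limits (exactly as in Theorem~\ref{thm:apx}) yields the stated $(5s_i+2)$ storage and $(8u_i+4\tau_{max})$ upload bounds.

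For the cost I would argue in two stages. The per-term bound $\frac{z_{ik}}{2}\le \frac{y_{ik}}{y_{ik}+1}\le z_{ik}$ holds for every pair $(i,k)$, so in particular it survives the restriction to $k\notin Y_i^{t-1}$; thus the replacement-aware objective is a $2$-approximation to the true serving-plus-placement cost. Then, tracing the same steps~(2),~(5) and~(6) of \round\ that inflate the server upload, each contributing at most a factor of $2$, the serving-cost term grows by at most a factor of $8$; the placement term is controlled by the same copy-count (storage) bounds already established, so it too stays within the rounding factor. Composing the factor-$2$ approximation with the factor-$8$ rounding gives cost at most $16$ times the fractional cost. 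Finally I would invoke Lemma~\ref{LEM:REDUCTION} to convert the $\tau$-constrained integral solution into a feasible solution of the original problem at the same cost, the factor-$2$ storage blow-up there already being folded into the $(5s_i+2)$ figure.

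The main obstacle is the cost bound for the placement term under rounding: Lemma~\ref{LEM:ROUNDING} explicitly tracks only the server-upload (serving) cost, so I would need to check that the redistribution and GAP-based opening of copies in steps~(1)--(4) do not increase the number of stored copies beyond the storage bound, ensuring $\sum_{i}\sum_{k\notin Y_i^{t-1}}\frac{y_{ik}}{y_{ik}+1}$ is preserved up to the same constant. The restriction to $k\notin Y_i^{t-1}$ only removes nonnegative terms from the objective and never alters a constraint, so it cannot worsen either feasibility or the approximation, which is why the Theorem~\ref{THM:ONLINE} argument transfers with the identical constant $16$.
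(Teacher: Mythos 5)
Your overall strategy --- observing that the replacement-aware program differs from $I_3$ only by restricting the placement sum to $k\notin Y_i^{t-1}$, which removes nonnegative terms and touches no constraint, and then mirroring Theorem~\ref{THM:ONLINE} --- is exactly the paper's route, and your storage and upload bounds transfer for the reasons you give. However, there is a genuine gap in your cost argument for the placement term. You assert that \round\ ``is driven entirely by the constraints \ldots never by the particular form of the objective,'' and that the placement term ``is controlled by the same copy-count (storage) bounds already established.'' Both claims fail. A per-cache bound of roughly $4s_i+2$ stored copies says nothing about the \emph{ratio} between the integral placement cost and the fractional one: the fractional solution may carry a very small placement cost $\sum_{i,k} y_{ik}/(y_{ik}+1)$, far below $s_i$, while an objective-blind rounding is free, within the very same storage budget, to open many more distinct copies. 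Feasibility of the copy counts and preservation of the cost are different statements, and the first does not imply the second with any constant.

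The paper's proof of Theorem~\ref{THM:ONLINE} (Appendix~\ref{app:online-proof}), which the present theorem inherits, closes this gap in a way that is not optional: with $f(y)=y/(y+1)$, (i) Step 1 inflates $\sum f(y_{ik})$ by at most $f(1)/f(1/2)=3/2$ by monotonicity of $f$; (ii) Step 2 only decreases it and Step 3 leaves it unchanged; and (iii) Step 4 is run as the \emph{cost-carrying} GAP of Shmoys--Tardos, i.e.\ GAP($n, s', y, OPT_f(Y)$), whose rounding guarantees that the integral cost is at most the fractional GAP cost; since each rounded $y_{ik}\leq 2$, one has $1 \leq y_{ik}+1 \leq 3$, giving a $3$-approximation on this part. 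So the rounding \emph{is} objective-aware precisely where you claim it is not. Multiplying, the placement part degrades by at most $\frac{3}{2}\cdot 3 = \frac{9}{2}$, the serving part by $8$ (Lemma~\ref{LEM:ROUNDING}), and composing with the factor $2$ from $z_{ik} \leq 2\, y_{ik}/(y_{ik}+1)$ yields the stated $16$. Your closing paragraph correctly flags this as ``the main obstacle,'' but then resolves it by the storage-bound argument, which is the one step that does not work; to repair the proof you must either invoke the cost-carrying GAP as above, or supply a counting argument (e.g.\ the $n_k$ copies opened in Step 4 cost at most $n_k \leq 2\sum_i f(\hat{y}_{ik})$, since $f(\hat{y}) \geq \hat{y}/2$ for $\hat{y}\leq 1$) in its place.
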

The above result states that we can get 
probably good
solution for the data placement problem in every epoch. It does not, 
however, give guarantees on the solution over multiple (or all) epochs. 
For example, a scheme that has knowledge of requests across multiple future epochs 
may get items that would reduce the cost (specifically, the initial placement cost) 
not just for the current epoch, but for subsequent future epochs as well.
We believe such a scheme would not benefit too much as compared 
to our scheme because of two reasons. First, it would require one to
predict requests that are far into the future, which would be both 
impractical and inaccurate, limiting the usefulness of this scheme. 
Second, 
in our experiments (Section~\ref{sec:simulations}), we observed that 
the initial cost is a very small fraction ($\le 15\%$) of the total cost; 
hence the gain from optimizing it is not expected to result in significant cost reduction.
%
%

\cut{In our experiments (Section~\ref{sec:simulations}), we observed that 
typically 5-10\% of the upload limits are spared
in an epoch. We use these spare upload limits in the following way to 
bring down the initial placement cost further. If some object is to be placed on multiple
caches, then downloading it each time from the server incurs unnecessarily high cost.
Instead, we download a single copy of the object from the server, and then 
utilize the spare upload limits at the caches (from the previous epoch) to re-route
the objects to its destination.
}

\medskip 

\noindent{\bf A Lower Bound:} We now provide a simple lower bound on the total cost of the data placement problem. For cache $C_i$, define a function $f_{ik} = r_{ik} -1$ if $k$ was present in cache in previous epoch and $f_{ik} = r_{ik}$ otherwise, that defines the amount of transmission saved by caching $o_k$ for the next epoch. Now, suppose we cache at $C_i$ the set $F_i$ (of size $|F_i| = s_i$) of items with the maximum $f_{ik}$ values.
Clearly, this technique completely ignores cooperation among caches and minimizes (optimally) only the cost of serving local demands on each cache; we call it the \emph{local caching} algorithm and compare our solution with it.
The cost of this placement is $\sum_{i, k}{r_{ik}} - \sum_{i, k\in F_i} f_{ik}$, and it is easy to show that this is at most $\sum_i u_i$ away from the optimum. We compare our algorithms with this lower bound on the cost of the optimum solution (that we will call $C_{lb}$), and our results show that our approach very close to it (always with in $20\%$, in many cases with in $3\%$ of $C_{lb}$).


\cut{Again, we formulate the problem as a integer program with modified objective function and round the fractional solution.
As observed before, we can handle the
download limits by blowing up the storage by a factor of $2$. Therefore,
we borrow the formulation of $\tau$-LP ($I_2$) and modify it to get the following linear program ($I_3$), that accounts for the initial placement cost. We only list the modified objective function and additional constraints added to the $\tau$-LP.
$$\min \sum_{j,k}  x_{0jk} + \sum_{i,k} z_{ik}$$
\vspace{-0.1in}
\begin{eqnarray}
%
%
z_{ik} &= \left\{ \begin{array}{rl}
 0 &\mbox{ if \ \ \ $y_{ik} = 0$} \\
 1 &\mbox{ otherwise}
       \end{array} \right. &\forall i,k \in C
\end{eqnarray}

Here $z_{ik}$ is 1 in a copy $O_k$ is stored in cache $C_i$, 0 otherwise. Clearly, the term $\sum_{i,k} z_{ik}$ accounts for the initial placement cost. 
Notice that in the solution of $\tau$-LP, there could be multiple copies of an object at a cache (as $y_{ik}$ could be  greater than $1$), so we could not directly use $y_{ik}$ to calculate the initial placement cost. 
The problem with this formulation is last constraint is not linear, making it hard to relax the integer program to an LP with bounded integrality gap.
To get around this difficulty, we replace each $z_{ik}$ by $\frac{y_{ik}}{y_{ik}+1}$.
and attempt to minimize the quantity
$$\sum_{j,k}  x_{0jk} + \sum_{i,k} \frac{y_{ik}}{y_{ik}+1}$$
Observe that $\frac{z_{ik}}{2} \leq \frac{y_{ik}}{y_{ik}+1} \leq z_{ik}$
and therefore, an $\alpha$-approximation algorithm for the new formulation
will yield a $2\alpha$ approximation for the original problem ($I_3$). }

\cut{At the beginning of every epoch, the oracle returns a new set of  
predicted items for each of the cache locations. Also, for 
each cache, we allocate new upload and download
limits, to be used only in that epoch.
Whenever the request pattern changes over consecutive epochs, we do the following
cache replacement strategy. Let $I_i^{t}$
denote the set of items that are present in cache $C_i$ in epoch $t$. Therefore,
in epoch $t$, any item present in 
$I_i^t \cap I_i^{t-1}$, we need not fetch it again
from the server. Among the remaining items, it makes sense to 
fetch only those items 
for which $y_{ik} >0$ and and $o_k \notin I_i^{t-1}$. }

\cut{We now turn our attention to the initial placement cost. In this section,
we provide a new formulation of the data placement problem that takes care of
the initial placement cost, and build up algorithms based on the
techniques presented in the last section.}

\cut{Before describing the integer program that, we make some observations that simplify the structure of it. }

\eat{
In this section we will consider an incremental version of the data placement problem. Basically, instead of recomputing the cache state from the scratch at the beginning of every epoch, we would like to reuse items from previous cache state. This essentiall

\subsection{Formulation as Convex Optimization problem}

In this section we will show how to take care of initial placement cost at the beginning of every epoch. We can also look at it as an efficient cache replacement algorithm for our scenario, because we consider state of the caches in previous epoch while finding optimal data placement for this epoch. We also assume that any cache fetches an item for the first time from the central server (as opposed to some other cache that has spare upload limit) because ...
}
\eat {
Under these assumptions, all the constraints in LP (ref??) remain the same, but the cost function will have extra terms that capture the cost of initial placement of data. 

Let us reconsider the cost function for our optimization now. Let us assume that $I_i$ is set of items that are cached at cache $i$ in previous epoch. Also assume that $z_{ik}$ is a $0-1$ variable that indicates if item $k$ is cached at $k$ in this epoch. Hence, the cost of initial placement of data would be $\sum_{i} \sum_{k \notin I_i} z_{ik}$ (for every i, count number of items to be placed in cache i that are not already present due to last epoch). Hence we want to solve the following problem:
}

\cut{
Note that the above problem is not a linear integer program any more because $y_{ik}$ can be any positive integer and there is no linear way to express $z_{ik}$. To solve this problem, we will replace $z_{ik}$ by $\frac{y_{ik}}{y_{ik}+1}$. Note that $\frac{z_{ik}}{2} \leq \frac{y_{ik}}{y_{ik}+1} \leq z_{ik}$ and hence our new cost function is a $2$-approximation to our old cost function and hence an $\alpha$-approximation algorithm for our new cost function will lead to $2\alpha$-approximation for our old cost function. From now on, we will deal with the following problem:
$$\min \sum_{j,k}  x_{0jk} + \sum_{i} \sum_{k \notin I_i} \frac{y_{ik}}{y_{ik}+1}$$
\vspace{-0.1in}
\begin{eqnarray}
s.t. \\
all\ constraints\ as\ earlier \\
\end{eqnarray}

Note that the relaxation of this problem is not a linear program. But, it is minimizing a convex cost function over a set of linear constraints, which is known to have $(1+\epsilon)$-approx algo poly time in 1/epsilon... ref. We shall use this algo for getting a fractional solution and round it in a way similar to previous section.

\subsection{Rounding the $\tau$-constrained LP}
Consider the steps as in previous section. Let $C_x$ be the $x$-part of the cost function, which is $\min \sum_{j,k}  x_{0jk}$ and $C_y$ be the $y$-part of the cost function, which is $\sum_{i} \sum_{k \notin I_i} z_{ik}$. Let us also use $C'_y$ to denote $\sum_i \sum_{k \notin I_i} y_{ik}$. Our rounding procedure remains exactly the same except for Step 4, where we solve GAP($n, s', y, C'_y$) instead of ...

\eat{
Let us first consider rounding of $y_{ik}$, which happens in steps 1-4. Since $\frac{y_{ik}}{y_{ik}+1}$ is increasing function of $y_{ik}$, 
Note that the rounding of $y_{ik}$ will chan
analysis of rounding $x_{ijk}$ and effect of it on the cost function and approximation factors does not change. Only rounding of $y_{ik}$ changes 
only steps 1, 2 and 4 effects $y_{ik}$ and the 
\medskip

\noindent {\bf (Step 1)} Since $F(y_{ik}) = \frac{y_{ik}}{y_{ik}+1}$ is increasing function of $y_{ik}$, this step increases $C_y$ only by a factor of $\frac{F(1)}{F(0)}=1.5$.

\noindent {\bf (Step 2)} This step only decreases $C_y$, again because $F(y_{ik})$ is increasing function of $y_{ik}$.

\noindent {\bf (Step 3)} This step has no effect on cost function.

\noindent {\bf (Step 4)} Let us now solve GAP($n, s', y, C'_y$). Since $0 \leq y_{ik} \leq 1$ for all $y_{ik}$ going in to GAP and GAP algorithm rounds each $y_{ik}$ to at most $2$, the rounded solution satisfies $1 \leq y_{ik}+1 \leq 3$ and hence  $C'_y$ is a $3$-approx to $C_y$.

\noindent {\bf (Step 5 \& 6)} Now, the rounding and analysis for $x_{ijk}$ and $C_x$ remains exactly same as earlier.

\medskip
}
\begin{lemma}
The new rounding finds a solution with at most $8$ times the optimal cost, and requires $(4s_i + 2)$  storage and $(8u_i + 4 \tau)$ upload at cache $i$.
\end{lemma}
\begin{proof}
Let us consider the analysis for each step of the rounding procedure.
\medskip

\noindent {\bf (Step 1)} Since $F(y_{ik}) = \frac{y_{ik}}{y_{ik}+1}$ is increasing function of $y_{ik}$, this step increases $C_y$ only by a factor of $\frac{F(1)}{F(0)}=1.5$.

\noindent {\bf (Step 2)} This step only decreases $C_y$, again because $F(y_{ik})$ is increasing function of $y_{ik}$.

\noindent {\bf (Step 3)} This step has no effect on cost function.

\noindent {\bf (Step 4)} This step solves GAP($n, s', y, C'_y$). In the fraction solution going in to GAP, $0 \leq y_{ik} \leq 1$ for all $y_{ik}$, because of earlier steps.  GAP algorithm rounds each $y_{ik}$ to at most $2$. The rounded solution now satisfies $1 \leq y_{ik}+1 \leq 3$ and hence  $C'_y$ is a $3$-approx to $C_y$ (since each $y_{ik}+1$ becomes from at least $1$ to at most $3$). GAP rounding does not change the cost, it only relaxes the constraints.

\noindent {\bf (Step 5 \& 6)} Now, the rounding and analysis for $x_{ijk}$ and $C_x$ remains exactly same as earlier.

\medskip

Since different set of $y_{ik}$s are rounded in different steps, our approximation from rounding is at most $3$. And our approximation due to replacing $z_{ik}$ by $\frac{y_{ik}}{y_{ik}+1}$ is at most $2$. Hence overall, our approximation to the $x$-part of the original cost function is at most $8$ (from earlier section) and to the $y$-part is at most $3*2 = 6$. Hence over all, we get an $8$-approximation.
\end{proof}

theorem
}
\section{Optimization}
\label{sec:ext}

Computation of a low-cost data placement strategy needs to 
solve the LP $I_2$ optimally. However, as the number of caches 
increases, the running time of the LP solver becomes prohibitively large
(a sample run of 200 caches takes more than a day using the 
CPLEX solver). This increased running time is intuitive, owing
to the large complexity of the LP solvers. 
%

We propose a cache clustering technique to speed up 
the computation of the fractional LP solution.
The idea is to group nodes into clusters, solve the fractional $\tau$-constrained 
LP for each cluster separately, and then apply the rounding 
procedure on the combined solution of the clusters. The key 
benefit is that running LP on small clusters is very fast; 
however, we risk losing benefits of cooperation among caches 
that were put in different clusters.
Our clustering procedure 
attempts to maximize 
the intra-cluster cooperation by grouping caches based on the similarity 
of the items they request.
The intuition is that if caches with similar requests 
are put into the same cluster, they will benefit more from 
cooperating with other caches in the cluster (in serving 
the common requests), than with caches in other 
clusters with dissimilar requests. 


We use a simple hierarchical bottom-up clustering 
to compute the clusters.
At each iteration, we greedily pick the pair of clusters 
which has the maximum similarity and merge them.
We define similarity sim($i,j$) between clusters $F_i$ 
and $F_j$ as the \emph{Jaccard's coefficient} of their 
request sets $R_i$ and $R_j$, respectively, 
sim($i,j)=\frac{|R_i \cap R_j|}{|R_i \cup R_j|}$.
Our technique takes only a single parameter $k$, which denotes 
the maximum size of any cluster. We merge clusters only if 
combining them produces a cluster of size at most $k$.
Note that we do not merge clusters if the request sets 
do not overlap; this is because if the caches in the two clusters do 
not have common requests, they will not benefit from cooperative caching.
{
\begin{algorithm}[t]
\caption{\clust$(C, k)$}
\label{algo:clust}
\begin{algorithmic}[1]
\medskip
\STATE $F_{1...n}=C_{1...n}$;
\WHILE {$|F| > 1$}
\STATE choose a pair of clusters $F_i$, $F_j \in F$ s.t. $|F_i\cup F_j| \leq k$ with maximum
	value of $\frac{|R_i \cap R_j|}{|R_i \cup R_j|};$
\STATE \textbf{if} ($|R_i \cap R_j| \leq 0$) \ \  \textbf{break};
\STATE $F \leftarrow F - \{F_i, F_j\}\cup \{F_i \cup F_j\}$; /* merge clusters */
\ENDWHILE
\RETURN $F$;
\end{algorithmic}
\end{algorithm}
}

{
\begin{figure*}[tbp]
\begin{minipage}[p]{0.3\textwidth}
\centering
\includegraphics[width=\textwidth]{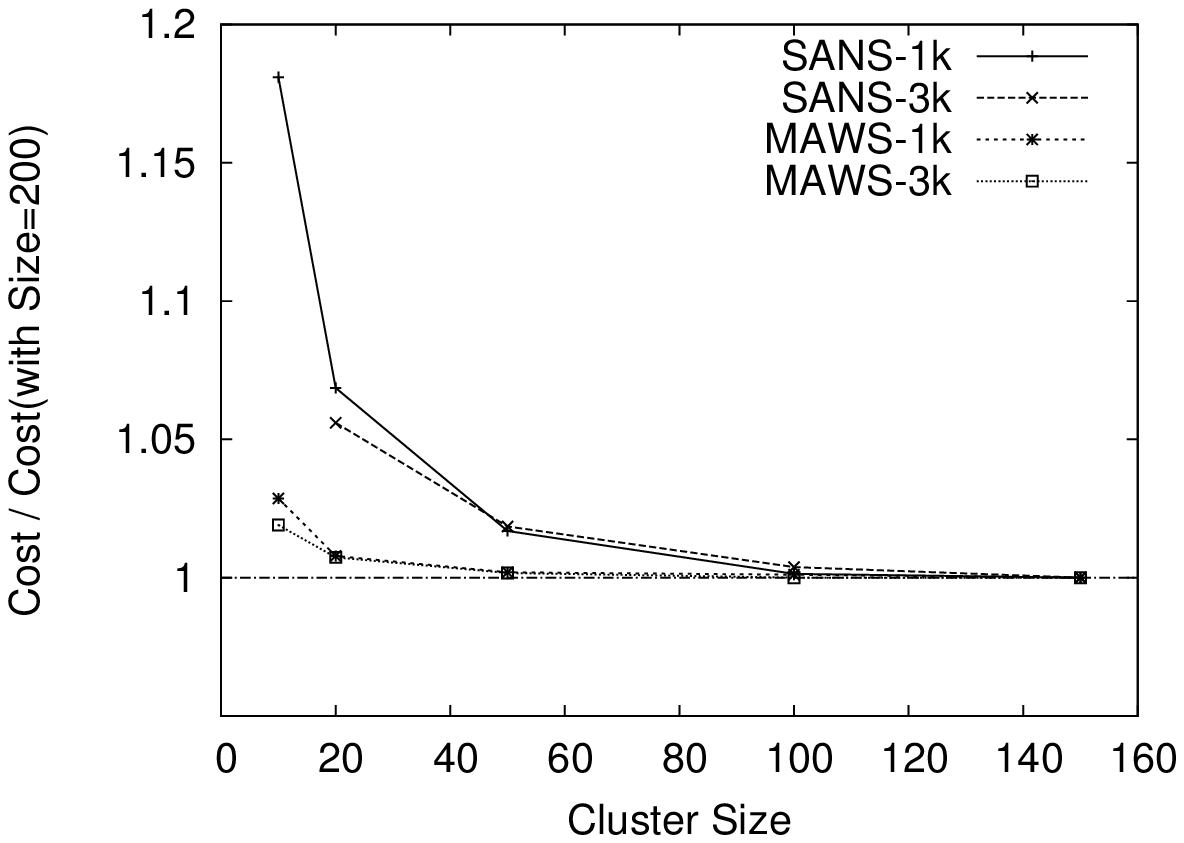}
\vspace{-.2in}
\caption{Costs for different cluster sizes}
\label{fig:cost-vs-clustsize}
\end{minipage}
\begin{minipage}[p]{0.3\textwidth}
\centering
\includegraphics[width=\textwidth]{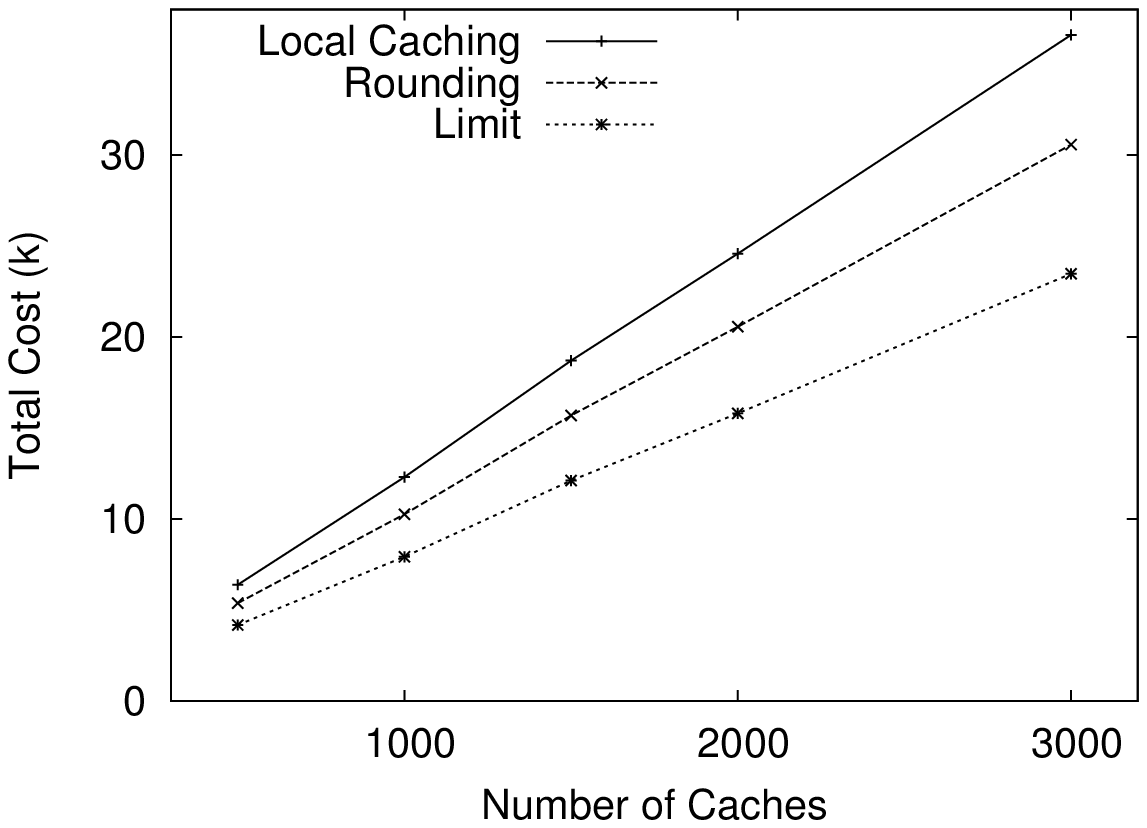}
\vspace{-.2in}
\caption{$C_{round}$ vs. $n$(MAWS).}
\label{fig:cost-vs-ncaches-mango}
\end{minipage}
\begin{minipage}[p]{0.3\textwidth}
\centering
\includegraphics[width=\textwidth]{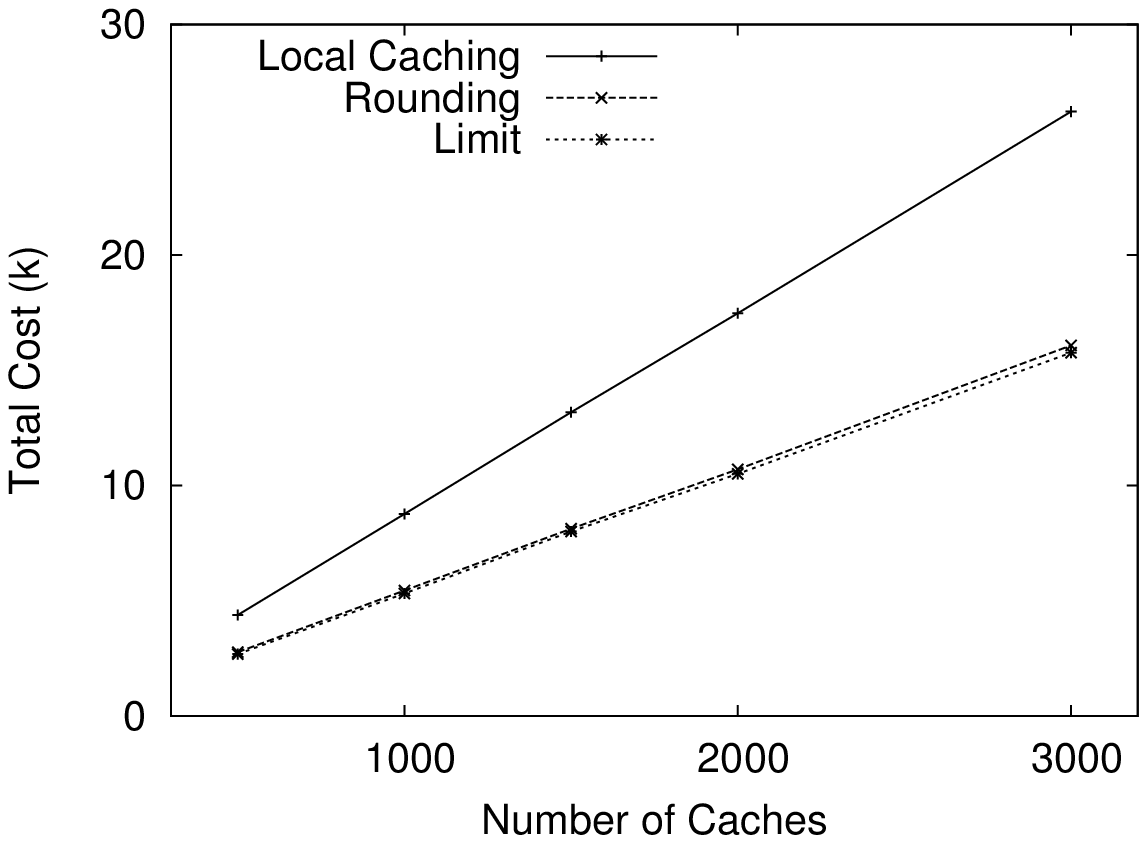}
\vspace{-.2in}
\caption{$C_{round}$ vs. $n$ (SANS).}
\label{fig:cost-vs-ncaches-netflix}
\end{minipage}
\end{figure*}

\cut{
\begin{figure*}[htpb]
\centering
\includegraphics[width=\textwidth]{figs/cost-vs-clustsize.eps}
\vspace{-.2in}
\caption{Costs for different cluster sizes}
\vspace{-.2in}
\label{fig:cost-vs-clustsize}
\end{figure*}
}
\section{Simulations} \label{sec:simulations}

We now describe an extensive set of simulations to evaluate our techniques.
First, we describe our experimental set-up, and how the simulation inputs
are generated.  Then, we explore some crucial system design parameters
that determine both the computational time and quality of our
solutions. Once we have determined suitable values of these parameters,
we will evaluate our approach (and compare with a lower-bound on costs,
and other heuristics) for a range of inputs values.
\vspace{-0.1in}
\subsection{Input Generation}
As described in Section~\ref{sec:model}, our system inputs consist of
$n$ cooperating caches, each cache $i$ has a storage capacity $s_i$,
upload and download usage caps $u_i$ and $d_i$ respectively. Also, there
exist $n_u$ users and $m$ content items (requested and served) in the
system. \cut{As mentioned before, the cooperating caches will be
controlled by a central entity. This entity will collect usage statistics at each
cache over a period of time, and create an estimate of the requests that
will arrive at each cache within the next decision period.} We consider two
different system scenarios: 1) {\em single association, no sharing} (SANS): each
user associates only with one cache and requests some items, but does not
generate or share content with other users (this corresponds to the scenario of
home users downloading  movies, or software updates). To generate this input, we
randomly associate each cache with a user, and the user randomly selects on
average $n_{pop\_items}$ out of available $m$ items; 2) {\em multiple
associations, with sharing} (MAWS): a more general scenario,  in which users on
average associate with $n_{assoc}$ caches, and can also generate and share
$n_{shares}$ pieces of content with $n_{friends}$. The number of associations
reflect the number of caches a user is likely to visit (and spend time) in
the course of a day (e.g. bus-stops near home and work). While generating this
input, our goal was also to preserve a notion of geographical {\em locality
of requests}, i.e. some items are requested more frequently from a small set of
caches, and also {\em locality of social graph}, i.e. friends have a greater
chance of associating with the same caches. To ensure this, we created a mapping
between items to caches, and with a higher likelihood, users requested items
mapped to their associated caches. Similarly to create the social graph,
with high likelihood, a user selected friends from caches with common associations.

{\em Real World Traces:} We also used real-world traces 
(from the CRAWDAD repository\cite{crawdad-wifi}) collected at
Wi-Fi access points distributed across the city of Montreal. These traces 
gave us realistic data on user association patterns with 
multiple public hot-spots. However, due to the limited scale 
of the data (only $\approx 150$ hot-spots), bulk of our experiments 
are performed
with the synthetically generated input as described above. 


{\em Request patterns across epochs:} Another key aspect of the simulated
system is the dynamic nature of the request patterns over time. To capture this
in our simulations, at each epoch we randomly select and expire $5\%$ of items
from the caches. We then randomly select and repopulate the expired items in the cache.

\cut{
{\em Real World Traces:} In our experiments, we also used a publicly available
wifi session trace obtained from the CRAWDAD data repository \cite{}. The trace
consists of user sessions across $140$ hot-spots distributed throughout the city
of Montreal and is collected over a span of three years. However, due to the
limited scale of the data (only $\approx 150$ caches), most of our experiments are performed
with synthetically generated data as mentioned above.
}

Based on these strategies, we associate users with caches, items and friends,
and compute the number of requests for each item at each cache in the system. This
constitutes the input to the CPLEX LP solver. The rounding method (and
other post-processing steps) applied to the resulting solution provide the costs
computed by our approach, and for ease of exposition, we will
subsequently refer to it as $C_{round}$. 

\vspace{-0.1in}
\subsection{Choosing Cluster Size}
\cut{
\begin{figure*}[tbp]
\begin{minipage}[p]{0.3\textwidth}
\centering
\includegraphics[width=\textwidth]{figs/cost-vs-clustsize.eps}
\vspace{-.2in}
\caption{Costs for different cluster sizes}
\label{fig:cost-vs-clustsize}
\end{minipage}
\begin{minipage}[p]{0.3\textwidth}
\centering
\includegraphics[width=\textwidth]{figs/cost-vs-ncaches-mango.eps}
\vspace{-.2in}
\caption{$C_{round}$ vs. $n$ (MAWS).}
\label{fig:cost-vs-ncaches-mango}
\end{minipage}
\begin{minipage}[p]{0.3\textwidth}
\centering
\includegraphics[width=\textwidth]{figs/cost-vs-ncaches-netflix.eps}
\vspace{-.2in}
\caption{$C_{round}$ vs. $n$ (SANS).}
\label{fig:cost-vs-ncaches-netflix}
\end{minipage}
\end{figure*}
}

\begin{figure*}[tbp]
\begin{minipage}[p]{0.3\textwidth}
\centering
\includegraphics[width=1.05\textwidth]{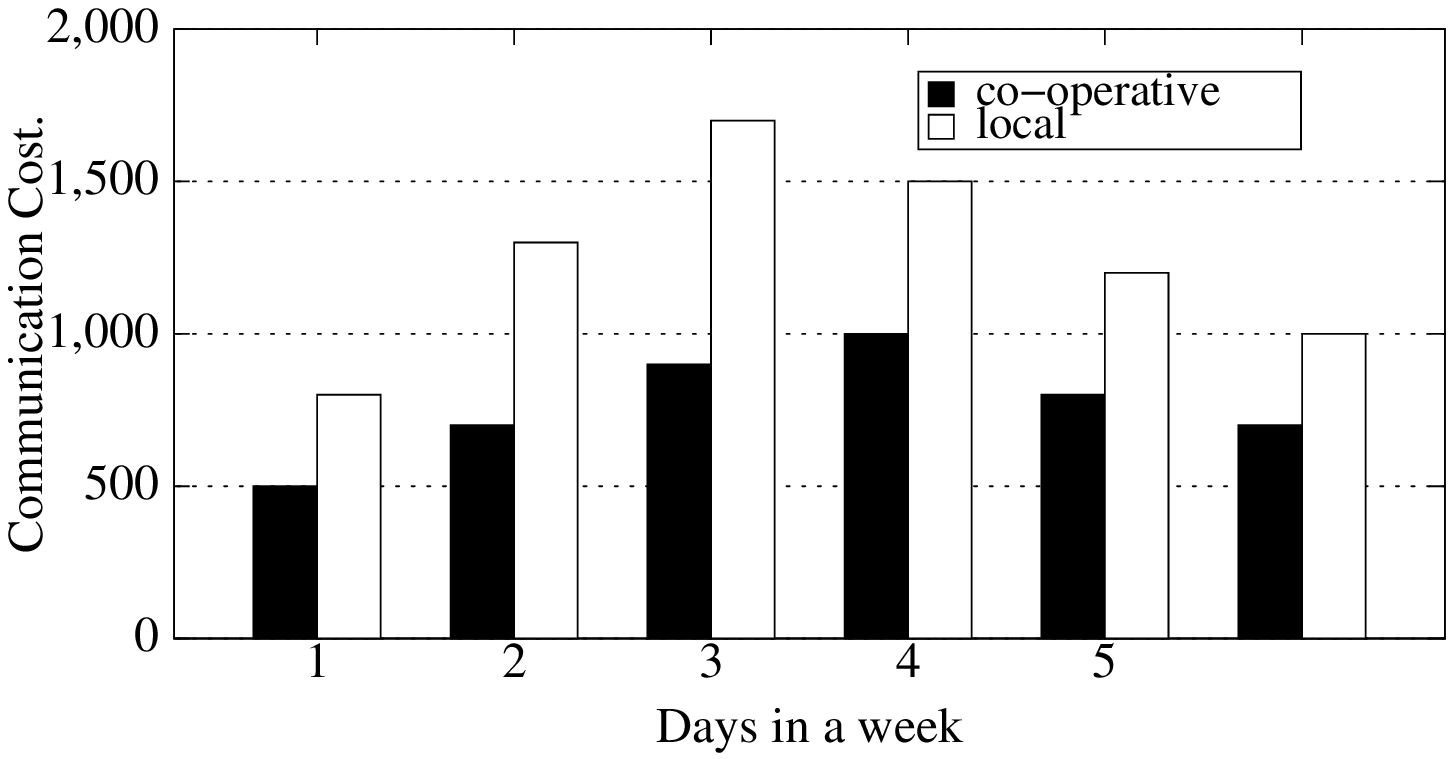}
\vspace{-.2in}
\caption{$C_{round}$ vs. Local Caching on WiFi trace}
\label{fig:wifi}
\end{minipage}
\begin{minipage}[p]{0.3\textwidth}
\centering
\includegraphics[width=\textwidth]{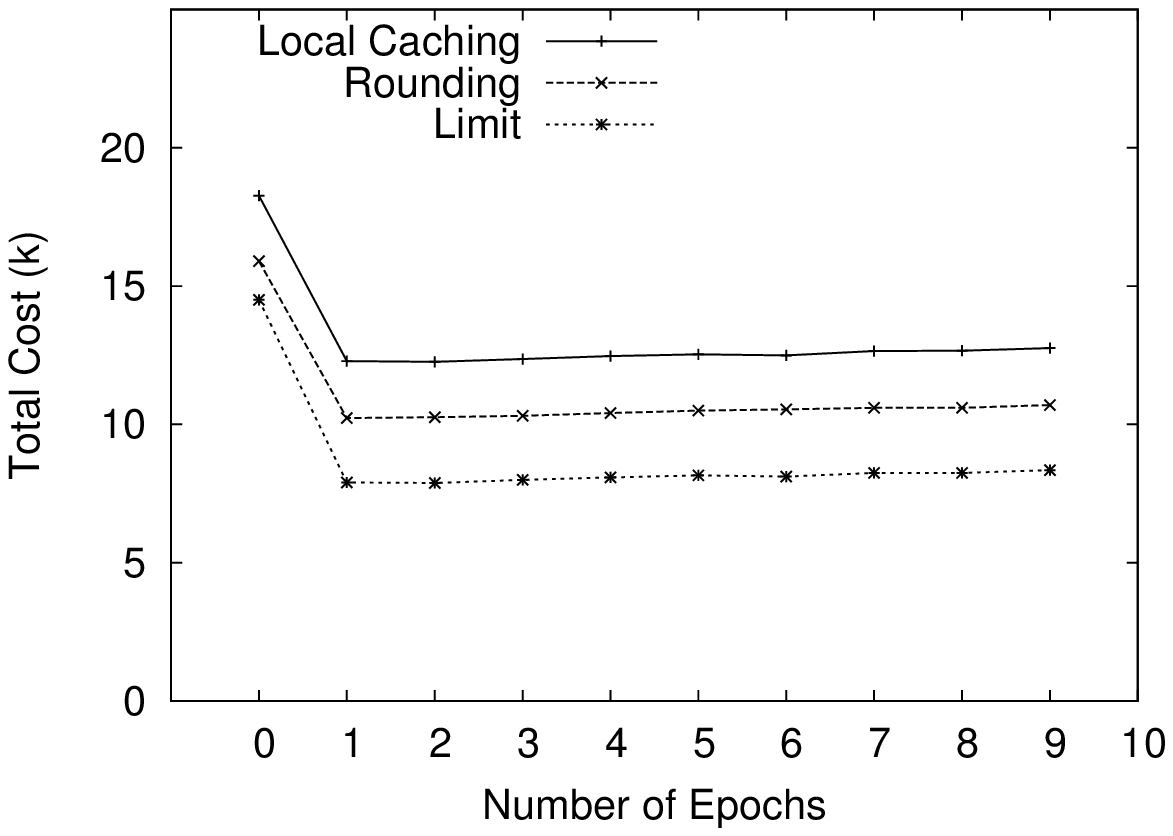}
\vspace{-.2in}
\caption{$C_{round}$ vs. epochs (MAWS).}
\label{fig:cost-vs-nepochs-mango}
\end{minipage}
\begin{minipage}[p]{0.3\textwidth}
\centering
\includegraphics[width=\textwidth]{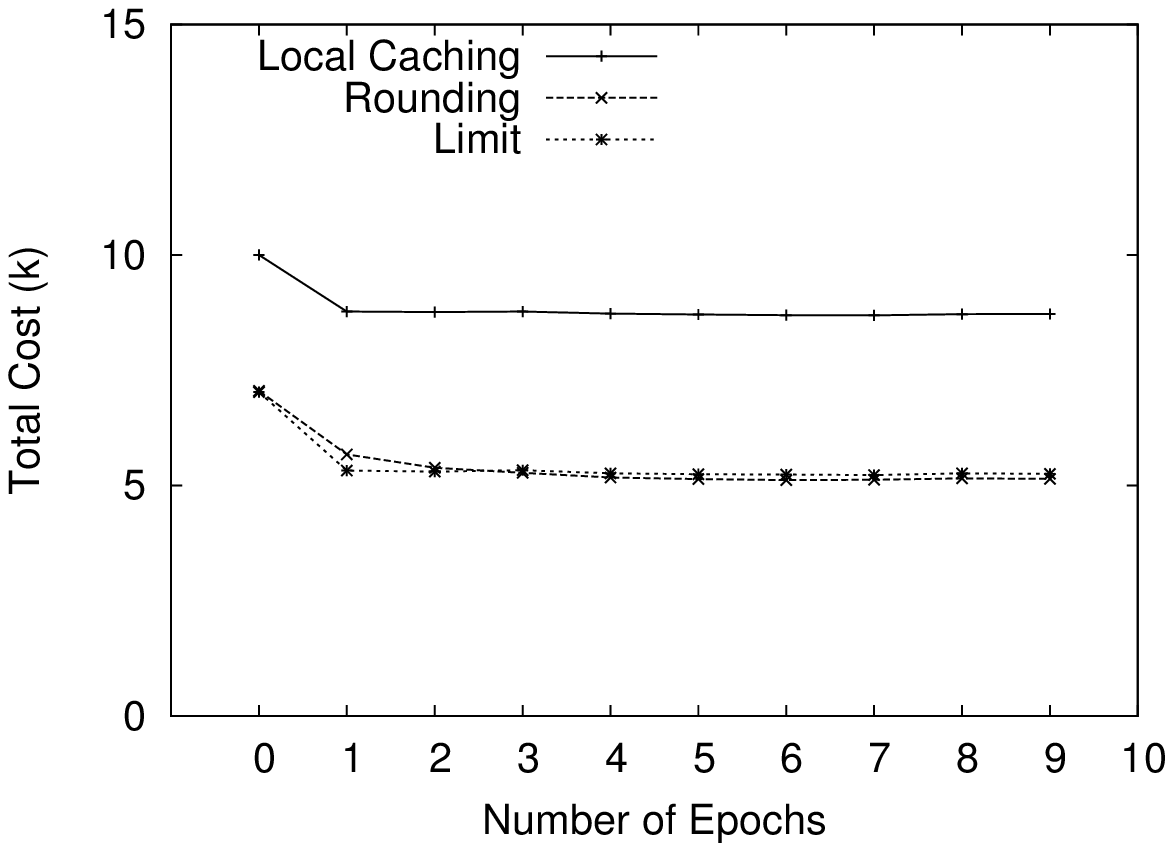}
\vspace{-.2in}
\caption{$C_{round}$ vs. epochs (SANS).}
\label{fig:cost-vs-nepochs-netflix}
\end{minipage}
\end{figure*}


\cut{As mentioned earlier in Section~\ref{sec:ext}, solving the LP for a system with thousands of caches, users and items has a high computational overhead. In Section~\ref{sec:ext}, we introduced an alternate approach (\clust) to reduce the problem size by partitioning caches into clusters, and recombining the results for an approximation of the overall cost of data-placement. }
Our first experiment is designed to determine a reasonable cluster size for the 
clustering procedure.
Intuitively, we expect the output costs from 
clustering to improve when the size of clusters 
is large (with a subsequent increase in execution time).
In Fig.~\ref{fig:cost-vs-clustsize}, we plot the impact of cluster size on the cost of the \round\ algorithm. To evaluate the marginal gains of increasing size, we normalized the costs by dividing it with the cost with cluster size $200$ (as this is the largest cluster size that we could run).
The input values for the SANS case are $n = 1000$, $m = 100$, $u_i$ randomly selected between $1-5$,  $\tau = 1/3$, $n_u = 10000$, and $n_{pop\_items} = 10$. Additionally, for MAWS, $n_{assoc} = 2$ and $n_{shares} = 3$. As can be seen, beyond  cluster sizes  $> 50$ the marginal gains of larger cluster size are insignificant. Moreover, the execution time on this particular input grows $5$ times from a cluster size of $50$ to $100$. Thus, for subsequent experiments, we fix the cluster size to $50$.

\cut{
The second design parameter we explore is $\tau$. As explained earlier, it is expected that increasing the available storage at each cache will improve the performance of the data placement scheme. We explore the ``bank-for-the-buck" of having more storage at each cache. 
We compared $C_{round}$ for both SANS and MAWS for different values of $\tau$. We found there exists $\tau$ at which $C_{round}$ comes extremely close to the lower-bound $C_{lb}$ (within $10$\%), and an increase in the storage offers only marginal benefits. Given this we select $\tau = 1$ for SANS, and $\tau = 3$ for MAWS, because amongst the $\tau$ considered, these values offer the best combination of low costs and storage requirements across different inputs.}

\vspace{-0.1in}
\subsection{Comparison and evaluation over a range of inputs}
Now that we have identified suitable values of design parameters, we now
evaluate and compare the performance of our approach across different
values of $n$, both for the SANS scenario in
Fig.~\ref{fig:cost-vs-ncaches-netflix} and MAWS
in Fig.~\ref{fig:cost-vs-ncaches-mango}. The values for other input
parameters are as before. The results are averaged over $5$ different
input graphs and $3$ epoch periods. The goal of these experiments
is two fold. We quantify the benefits of co-operative caching as proposed in our
approach ({\em Rounding}) compared to when caching is only done locally, at each node
({\em Local-caching}), and find that our co-operative algorithm reduces content delivery costs by 
$20\--30\%$\footnote{Note that in SANS, local caching is not expected to give any savings as there
is at most one request of any item on a cache, hence this experiment is only for the MAWS scenario}. 

Moreover, we also found that our approach drastically reduces network load
at the central server by $55$\% and $27$\% for both the MAWS and SANS
cases respectively. The reduction in delivery costs is also significant in experiments with 
the real-life WiFi traces (Figure~\ref{fig:wifi}), where the {\em Rounding} approach outperforms
{\em Local-caching} by more than $50\%$. These gains are
significant since they indicate a service provider has considerable
buffer to incentivize users to set aside some of their resources
for co-operative caching.



Then we compare our approach with \cut{an alternate {\em Greedy} approach and also with} $C_{lb}$ a lower-bound on the benefits of co-operative caching. The {\em Rounding} approach stays well within $20$\% of the lower bound in the MAWS case, and within $5$\% in the SANS case. \cut{Also the greedy approach has between $10\--30$\% more costs as compared to the rounding solution, however it has significantly fast computational time, in the order of minutes for even the largest input set.}

 In our experiments, we have also computed the blow-ups in the upload and storage limits. Interestingly, we found that on average there is practically \emph{no} blow-up in either limit. Specifically, we found storage is increased by less than $1\%$, while about $5\%$ of upload limit remains unused.
Further, even the worst-case blow-ups of storage and upload limits are $3$ and $4.5$, much less than the worst case factors proved in Theorem~\ref{thm:apx}.

\cut{
\begin{figure*}[tbhp]
\begin{minipage}[p]{0.3\textwidth}
\centering
\includegraphics[width=\textwidth]{figs/cost-vs-nepochs-mango.eps}
\vspace{-.2in}
\caption{$C_{round}$ vs. $n$, multiple associations with sharing (MAWS) scenario.} \label{fig:online-mango}
\label{fig:cost-vs-nepochs-mango}
\end{minipage}
\begin{minipage}[p]{0.3\textwidth}
\centering
\includegraphics[width=\textwidth]{figs/cost-vs-nepochs-mango.eps}
\vspace{-.2in}
\caption{$C_{round}$ vs. $n$, single association with no sharing (SANS) scenario.} \label{fig:online-netflix}
\label{fig:cost-vs-nepochs-netflix}
\end{minipage}
\begin{minipage}[p]{0.3\textwidth}
\centering
\includegraphics[width=\textwidth]{figs/wifi.eps} \label{fig:online-wifi}
\vspace{-.2in}
\caption{$C_{round}$ vs Local on Wi-Fi trace}
\end{minipage}
\end{figure*}
}
\subsection{Dynamic request patterns over time}

\cut{Until now we have evaluated the performance of our data-placement 
problem in one snapshot of time. We now proceed to the next stage - 
across requests dynamically changing in time. }

We now study the cost varies over different epochs.
As discussed earlier, 
a prediction of requests arriving at a cache is computed 
across an {\em epoch}. Across epochs, given the current state 
of the caches, and a new set of requests, our algorithm computes 
a data-placement solution that minimizes the cost of serving 
content fetched either from the central server or a co-operative 
cache (as the case maybe). The data-placement costs of 
our {\em Rounding} algorithm (Figure~\ref{fig:cost-vs-nepochs-mango} and Figure~\ref{fig:cost-vs-nepochs-netflix}) 
are compared to that of a {\em Local-caching} scheme that computes the data-placement 
solution by considering what is already present in the caches. 
We make two observations, firstly, both in SANS and MAWS data-sets, the cost of data-placement 
significantly decreases after the first epoch (when the caches are empty), 
and stabilizes in subsequent epochs. Secondly, our scheme outperforms 
the {\em Local-caching} scheme by $20\%$, both in the first and the subsequent epochs. 

To conclude, through a fairly extensive set of simulations 
we have explored the performance of our proposed co-operative 
caching scheme. We have explored the system design space, 
and understood various aspects of our approach such as the 
computational time and blowups. We believe these results 
convincingly indicate that through a careful management of 
co-operative end-nodes one can substantially lower the cost 
of content delivery, while respecting the limitations 
of the end-host network resources.

\section{Conclusions} \label{sec:conclusions}

In this paper we consider a co-operative caching system consisting of end-nodes. Each node brings to the table some spare storage capacity, and spare capacity on their network back-haul links. Keeping in mind the limits at each node, we devise a centralized strategy to efficiently manage the distributed system resources, while reducing the cost on the central node. We believe this work has important implications for any end-node based caching solution, with tiered network access costs.  Moving forward, interesting directions for future work would include 1) predicting requests at a cache based on historical behaviorial patterns of users 2) managing a dynamic system with the cooperating end-nodes coming in and going out of the system 3) a distributed approach towards co-operative caching with similar constraints as studied in this paper.

{
\footnotesize
\bibliographystyle{IEEEtran}
\bibliography{mangocdn}
}

\appendix
\subsection{Proof of NP-Hardness}
\label{app:np-proof}

\begin{proof}
We reduce a given instance $P(O,W,t)$ of the partition problem into our problem
as follows. The cache placement instance contains four nodes with $C_0$ being
 the server. The node $C_1$ gets all the requests in $R$, s.t. the number of
requests for object $o_k$ is $r_{1k} = 2\cdot w_k$. Nodes $C_2$ and $C_3$ get
no request of their own and only help $C_1$ in serving its request (and hence
reducing the load on the server). The limits and costs on the nodes are as
follows: for the server $u_0=m$, and $d_0=0$; for $C_1$, $s_1=0$, $d_1=2W$, $u_1=0$;
for nodes $C_2$ and $C_3$, $d_2=s_2=t$, $d_3=s_3=m-t$ and $L^u_2=L^u_2=W$.
The extra costs are $\alpha_i=\beta_i=1$ ($\forall i\in[0,3]$).
In other words, node $C_1$ cannot cache any object locally, and has to get them
from one of the other nodes. Further, nodes $C_2$ and $C_3$ can download and
store only $t$ and $n-t$ objects, respectively, before their limits are exhausted.
The limit on the server ensures that if it serves more than $m$ downloads
(one per distinct item) then we start paying cost per download.

We want to prove a $1$-$1$ correspondence between our caching problem and the
partition problem. Specifically, the cost of caching problem is zero (when all
transfers stay within the upload/download limits) if and only if there is a
partition $T$. The if part is true as if there is such a partition $T$, we can
download objects in $T$ in cache $C_2$ and $O-T$ in $C_3$ and then serve all
requests from them, giving zero cost. For the only if part, let us assume that
there is a zero cost solution. Clearly, in this solution all the requests are
served from $C_2$ and $C_3$, since there are at least $2$ requests (as $w_i\geq 1$)
for each object and if any of them is served directly from the server, its upload
limit will be crossed. Further, since $C_2$ and $C_3$ together serve a total of $2W$
requests and each limit is $W$, for a zero cost solution they must each serve
exactly $W$ requests. The set of objects in $C_2$ represents partition.
This completes the proof.
\end{proof}

\subsection{Proof of Theorem~\ref{THM:ONLINE}}
\label{app:online-proof}
\begin{proof}
Observe that the effect of rounding on the first part of the cost
function (i.e. $\sum_{j,k}x_{0jk}$) follows from the analysis 
presented in Lemma~\ref{LEM:ROUNDING}. Therefore, we 
only consider the second part in the cost function. 
Let $OPT_f(Y)$ denote the second part in the cost function corresponding
to the optimum
fractional solution of the data placement problem.
We follow the steps in the rounding procedure and bound the approximation factor of 
$OPT_f(Y)$ at each such
step. 
\cut{
Consider the steps as in previous section. Let $C_x$ be the $x$-part of the cost function, which is $\min \sum_{j,k}  x_{0jk}$ and $C_y$ be the $y$-part of the cost function, which is $\sum_{i} \sum_{k \notin I_i} z_{ik}$. Let us also use $C'_y$ to denote $\sum_i \sum_{k \notin I_i} y_{ik}$. Our rounding procedure remains exactly the same except for Step 4, where we solve GAP($n, s', y, C'_y$) instead of ...

Let us first consider rounding of $y_{ik}$, which happens in steps 1-4. 
Since $\frac{y_{ik}}{y_{ik}+1}$ is increasing function of $y_{ik}$,
Note that the rounding of $y_{ik}$ will chan
analysis of rounding $x_{ijk}$ and effect of it on the cost function and approximation factors does not change. Only rounding of $y_{ik}$ changes
only steps 1, 2 and 4 effects $y_{ik}$ and the
}
\medskip

\noindent {\bf (Step 1)} Since $f(y_{ik})=\frac{y_{ik}}{y_{ik}+1}$ is an 
increasing function of $y_{ik}$, this step increases $OPT_f(Y)$ only 
by a factor of $\frac{f(1)}{f(0)}=1.5$.

\noindent {\bf (Step 2)} This step only decreases $OPT_f(Y)$, again because 
$f(y_{ik})$ is an increasing function of $y_{ik}$.

\noindent {\bf (Step 3)} This step has no effect on $OPT_f(Y)$.

\noindent {\bf (Step 4)} Let us now solve GAP($n, s', y, OPT_f(Y)$).\footnote {Observe that
this GAP instantiation refers to the original rounding technique 
described in Shmoys \etal~\cite{st-soda-93}. This GAP problem has a cost function 
and the rounding 
technique ensures that the cost of the integral solution is at most the cost
of the optimum fractional solution.}
Since $0 \leq y_{ik} \leq 1$ for all $y_{ik}$ going in to GAP and GAP algorithm rounds each 
$y_{ik}$ to at most $2$, the rounded solution satisfies 
$1 \leq y_{ik}+1 \leq 3$ and therfore the rounded solution  
is a $3$-approx to $OPT_f(Y)$.

\noindent {\bf (Step 5 \& 6)} These steps has no effect on $OPT_f(Y)$.

If the minimum cost of the convex optimization problem 
can be expressed as $OPT_f(X)+OPT_f(Y)$,
then the rounding step produces a solution of cost at most $8\cdot OPT_f(X)+ \frac{9}{2}
\cdot OPT_f(Y)$. 
The factors for upload and storage follows 
from the analysis presented in Lemma~\ref{LEM:ROUNDING}.
\end{proof}

\end{document}